\newtheorem{prop}[theorem]{Proposition}
\newtheorem{rem}[theorem]{Remark}
\begin{document}

\title{Dynamic 3-sided Planar Range Queries with Expected Doubly Logarithmic Time\thanks{This work is based on a combination of two conference papers that appeared in Proc. 21st International Symposium on Algorithms and Computation (ISAAC), 2010: pages 1-12 (by all authors except third) and 13th International Conference on Database Theory (ICDT), 2010: pages 34-43 (by all authors except first).}}

\author{Gerth St\o lting Brodal\inst{1}, Alexis C. Kaporis\inst{2}, Apostolos N. Papadopoulos\inst{4}, Spyros Sioutas\inst{3}, Konstantinos Tsakalidis\inst{1}, Kostas Tsichlas\inst{4}}
\institute{MADALGO\thanks{Center for Massive Data Algorithmics, a Center of the Danish National Research
Foundation.}, Department of Computer Science,
Aarhus University, Denmark\\
\email{ \{gerth,tsakalid\}@madalgo.au.dk}
\and
Computer Engineering and Informatics Department,
University of Patras, Greece\\
\email{kaporis@ceid.upatras.gr}
\and
Department of Informatics, Ionian University, Corfu, Greece\\ \email{sioutas@ionio.gr}
\and
Department of Informatics, Aristotle University of Thessaloniki, Greece\\ \email{tsichlas@csd.auth.gr}
}

\maketitle

\begin{abstract}
This work studies the problem of 2-dimensional searching for the 3-sided
range query of the form $[a, b]\times (-\infty, c]$ in both main and
external memory, by considering a variety of input distributions.
We present three sets of solutions each of which examines the 3-sided problem in both RAM and I/O model respectively.
The presented data structures are deterministic and the expectation is with respect to the input distribution:

(1) Under continuous $\mu$-random distributions of the $x$ and $y$ coordinates, we present a dynamic linear main memory solution, which answers 3-sided queries in $O(\log n + t)$ worst case time and scales
with $O(\log\log n)$ expected with high probability update time,
where $n$ is the current number of stored points and $t$ is the size of the query output.
We externalize this solution, gaining $O(\log_{B}n +t/B)$
worst case and $O(log_{B}logn)$ amortized expected with high probability I/Os
for query and update operations respectively, where $B$ is the disk block size.

(2)Then, we assume that the inserted points have their $x$-coordinates drawn from a class of {\em smooth} distributions, whereas the $y$-coordinates are arbitrarily distributed.
The points to be deleted are selected uniformly at random among the inserted points.
In this case we present a dynamic linear main memory solution that supports queries in $O(\log \log n + t)$ expected time with high probability and updates in $O(\log \log n)$ expected amortized time, where $n$ is the number of points stored and $t$ is the size of the output of the query. We externalize this solution, gaining $O(\log \log_B n + t/B)$ expected I/Os with high probability for query operations and $O(\log_B \log n)$ expected amortized I/Os for update operations, where $B$ is the disk block size. The space remains linear $O(n/B)$.

(3)Finally, we assume that the $x$-coordinates are continuously drawn from a smooth
distribution and the $y$-coordinates are continuously drawn from a more restricted class of
realistic distributions. In this case and by combining the Modified Priority Search Tree~\cite{SMKLTTV04} with the Priority Search Tree~\cite{MC81}, we present a dynamic linear main memory solution that supports queries in $O(\log \log n + t)$ expected time with high probability and updates in $O(\log \log n)$ expected time with high probability. We externalize this
solution, obtaining a dynamic data structure that answers 3-sided queries in
$O(\log_B \log n + t/B)$ I/Os expected with high probability, and it can be updated
in $O(\log_B \log n)$ I/Os amortized expected with high probability. The space remains linear $O(n/B)$.
\end{abstract}
\section{Introduction}
Recently, a significant effort has been performed towards developing
worst case efficient data structures for range searching in two dimensions \cite{V01}.
In their pioneering work, Kanellakis et al. \cite{KRVV93}, \cite{KRVV96} illustrated that the
problem of indexing in new data models (such as constraint, temporal
and object models), can be reduced to special cases of two-dimensional
indexing. In particular, they identified the {\it 3-sided range searching}
problem to be of major importance.

The 3-sided range query in the 2-dimensional space is defined by a region of the form
$R = [a, b]\times (-\infty, c]$, i.e., an ``open'' rectangular region, and returns all points
contained in $R$. Figure \ref{fig:3-sided} depicts examples of possible 3-sided queries,
defined by the shaded regions. Black dots represent the points comprising the result.
In many applications, only positive coordinates are used and therefore, the region defining the
3-sided query always touches one of the two axes, according to application semantics.

Consider a time evolving database storing measurements collected from a sensor network.
Assume further, that each measurement is modeled as a multi-attribute tuple of the form
$<$$id,a_1,a_2,...,a_d,time$$>$, where $id$ is the sensor identifier that produced the measurement,
$d$ is the total number of attributes, each $a_i$, $1 \leq i \leq d$, denotes the value
of the specific attribute and finally $time$ records the time that this measurement was produced.
These values may relate to measurements regarding temperature, pressure, humidity, and so on.
Therefore, each tuple is considered as a point in $\mathbbm{R}^d$ space.
Let $F$:~$\mathbbm{R}^d \rightarrow \mathbbm{R}$ be a real-valued ranking function that scores each
point based on the values of the attributes. Usually, the scoring function $F$ is monotone and without
loss of generality we assume that the lower the score the ``better'' the measurement (the other case is symmetric).
Popular scoring functions are the aggregates {\sf sum}, {\sf min}, {\sf avg} or other more complex combinations of the attributes.
Consider the query: ``search for all measurements taken between the time instances $t_1$ and $t_2$
such that the score is below $s$''. Notice that this is essentially a 2-dimensional 3-sided query
with $time$ as the $x$ axis and $score$ as the $y$ axis. Such a transformation from a multi-dimensional
space to the 2-dimensional space is common in applications that require a temporal dimension, where each
tuple is marked with a timestamp storing the arrival time \cite{MBP06}.
This query may be expressed in SQL as follows: \\

\noindent
{\sf SELECT $id$, $score$, $time$}  \\
{\sf FROM SENSOR\_DATA} \\
{\sf WHERE $time$$>=$$t_1$ AND $time$$<=$$t_2$ AND $score$$<=$$s$};
~\\

It is evident, that in order to support such queries, both search and update operations (i.e., insertions/deletions)
must be handled efficiently. Search efficiency directly impacts query response time as well as the general system
performance, whereas update efficiency guarantees that incoming data are stored and organized quickly, thus,
preventing delays due to excessive resource consumption. Notice that fast updates will enable the support of
stream-based query processing \cite{BBD+02} (e.g., continuous queries), where data may arrive at high rates and therefore
the underlying data structures must be very efficient regarding insertions/deletions towards supporting arrivals/expirations
of data. There is a plethora of other applications (e.g., multimedia databases, spatio-temporal) that fit to a scenario similar
to the previous one and they can benefit by efficient indexing schemes for 3-sided queries.

\begin{figure}[!t]
\begin{center}
\label{fig:3-sided}
\includegraphics[scale=0.80]{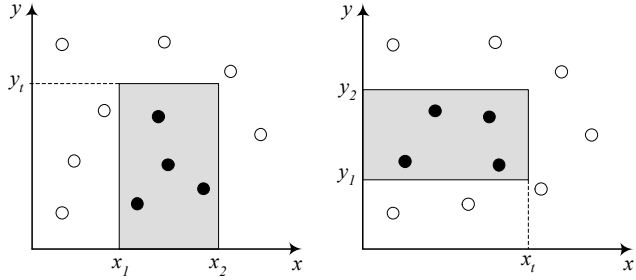}
\end{center}
\caption{Examples of 3-sided queries.}
\end{figure}

Another important issue in such data intensive applications is memory consumption. Evidently, the best practice is to keep
data in main memory if this is possible. However, secondary memory solutions must also be available to cope with
large data volumes. For this reason, in this work we study both cases offering efficient solutions both in the RAM and I/O
computation models. In particular, the rest of the paper is organized as follows. In Section \ref{prel}, we
discuss preliminary concepts, define formally the classes of used probability distributions and present the data structures
that constitute the building blocks of our constructions. Among them, we introduce the External Modified Priority Search Tree.
In Section \ref{theorems} we present the two theorems that ensure the expected running times of our constructions.
The first solution is presented in Section \ref{first}, whereas our second and third constructions are discussed in Sections \ref{second} and \ref{third} respectively.
Finally, Section \ref{concl} concludes the work and briefly discusses future research in the area.

\section{Related Work and Contribution}
\label{related}

\begin{table}
 \label{3sidedtable}
 \begin{minipage}{\textwidth}
 \begin{tabular}{|r|l|l|l|l|}
   \hline
 & Model & Query Time & Update Time & Space\\
   \hline
 McCreight~\cite{MC81} & RAM &  $O(\log n + t)$ & $O(\log n)$ &$O(n)$\\
   \hline
 Willard~\cite{W00} & RAM & $O \left( \frac{\log n}{\log \log n} + t \right)$ & $O \left( \frac{\log n}{\log \log n} \right)$, $O(\sqrt{\log n})$\footnote{randomized algorithm and expected time bound}& $O(n)$\\
   \hline
   New1a\footnote{$x$ and $y$-coordinates are drawn from an unknown
$\mu$-random distribution, the $\mu$ function never changes, deletions are uniformly random over the inserted points} & RAM & $O(\log n + t)$ & $O(\log \log n)$\footnote{expected with high probability}& $O(n)$\\
   \hline
 New2a\footnote{$x$-coordinates are smoothly distributed, $y$-coordinates are arbitrarily distributed, deletions are uniformly random over the inserted points} & RAM & $O(\log \log n + t)$ $^c$& $O(\log \log n)$\footnote{amortized expected}& $O(n)$\\
   \hline
   New3a\footnote{we restrict the $x$-coordinate distribution to be $(f(n),g(n))$-\emph{smooth}, for appropriate
functions $f$ and $g$ depending on the model, and the $y$-coordinate distribution to belong to a more restricted class of
distributions. The smooth distribution is a superset of uniform and regular distributions. The restricted class contains
realistic distributions such as the Zipfian and the Power Law} & RAM & $O(\log \log n + t)$$^c$ & $O(\log \log n)$$^c$& $O(n)$\\
   \hline
   \hline
 Arge et al.~\cite{ASV99} & I/O & $O(\log_B n + t/B)$ &  $O(\log_B n)$\footnote{amortized} & $O(n/B)$\\
   \hline
 New1b$^b$ & I/O & $O(\log_B n + t/B)$ & $O(\log_B \log n)$\footnote{amortized expected w.h.p.}& $O(n/B)$\\
   \hline
 New2b$^d$ & I/O & $O(\log \log_B n + t/B)$ $^c$& $O(\log_B \log n)$ $^e$& $O(n/B)$\\
   \hline
 New3b$^f$ & I/O & $O(\log_B \log n + t/B)$ $^c$& $O(\log_B \log n)$ $^h$& $O(n/B)$\\
   \hline
 \end{tabular}
 \caption{Bounds for dynamic 3-sided planar range reporting. The number of points in the structure is $n$, the size of the query output is $t$ and the size of the block is $B$.}
 \end{minipage}
 \end{table}

The usefulness of 3-sided queries has been underlined many times in the literature \cite{CXPS+04,KRVV96}.
Apart from the significance of this query in multi-dimensional data intensive applications \cite{dKOS98,KRVV96},
3-sided queries appear in probabilistic threshold queries in uncertain databases. Such queries are studied
in a recent work of Cheng et. al. \cite{CXPS+04}.
The problem has been studied both in main memory (RAM model) and secondary storage (I/O model).
In the internal memory, the most commonly used data structure for supporting 3-sided queries is the {\em priority search tree} of McCreight~\cite{MC81}. It supports queries in $O(\log n + t)$ worst case time, insertions and deletions of points in $O(\log n)$ worst case time and uses linear space, where $n$ is the number of points and $t$ the size of the output of a query. It is a hybrid of a binary heap for the $y$-coordinates and of a balanced search tree for the $x$-coordinates.

In the static case, when points have $x$-coordinates in the set of integers $\{ 0, \ldots, n-1 \} $, the problem can be solved in $O(n)$ space and preprocessing time with $O(t+1)$ query time~\cite{ABR00}, using a {\em range minimum query} data structure~\cite{HT84} (see also Sec.~\ref{prel}).

In the RAM model, the only dynamic sublogarithmic bounds for this problem are due to Willard~\cite{W00} who attains $O \left( \log n/\log \log n \right)$ worst case or $O(\sqrt{\log n})$ randomized update time and $O \left( \log n / \log \log n + t \right)$ query time using linear space. This solution poses no assumptions on the input distribution.

Many external data structures such as grid files, various quad-trees,
z-orders and other space filling curves, k-d-B-trees, hB-trees and
various R-trees have been proposed. A recent survey can be found in
\cite{GG98}. Often these data structures are used in applications,
because they are relatively simple, require linear space and perform
well in practice most of the time. However, they all have highly sub-optimal
worst case (w.c.) performance, whereas their expected performance is usually not
guaranteed by theoretical bounds, since they are based on heuristic rules for the construction
and update operations.

Moreover, several attempts have been performed to externalize Priority Search Trees,
including \cite{BG90}, \cite{IKO87}, \cite{KRVV96}, \cite{RS94} and \cite{SR95},
but all of them have not been optimal. The worst case optimal external memory
solution (External Priority Search Tree) was presented in \cite{ASV99}.
It consumes $O(n/B)$ disk blocks, performs 3-sided range queries in $O(\log_{B}n + t/B)$ I/Os
w.c. and supports updates in $O(\log_{B}n)$ I/Os amortized. This solution poses no assumptions on the input distribution.

In this work, we present new data structures for the RAM and the I/O model that improve by a logarithmic factor
the update time in an expected sense and attempt to improve the query complexity likewise.
The bounds hold with high probability (w.h.p.) under assumptions on the distributions of the input coordinates.
We propose three multi-level solutions, each with a main memory and an external memory variant.

For the first solution, we assume that the $x$ and $y$ coordinates are being continuously drawn from an unknown $\mu$-random
distribution. It consists of two levels, for both internal and external variants. The upper level of the first solution consists of a single Priority Search Tree \cite{MC81} that indexes the structures of the lower level. These structures are Priority Search Trees as well. For the external variant we substitute the structures with their corresponding optimal external memory solutions, the External Priority Search Trees \cite{ASV99}. The internal variant achieves $O(\log n + t)$ w.c. query time and $O(\log \log n)$ expected w.h.p. update time, using linear space. The external solution attains $O(\log_B n +t/B)$ I/Os w.c. and $O(\log_B \log n)$ I/Os amortized expected w.h.p. respectively, and uses linear space.

For the second solution, we consider the case where the $x$-coordinates of inserted points are drawn from a {\em smooth} probabilistic distribution, and the $y$-coordinates are arbitrarily distributed. Moreover, the deleted points are selected uniformly at random among the points in the data structure and queries can be adversarial. The assumption on the $x$-coordinates is broad enough to include distributions used in practice, such as uniform, regular and classes of non-uniform ones~\cite{AM93,KMSTTZ03}. We present two linear space data structures, for the RAM and the I/O model respectively. In the former model, we achieve a query time of $O(\log \log n + t)$ expected with high probability and update time of $O(\log \log n)$ expected amortized. In the latter model, the I/O complexity is $O(\log \log_B n + t/B)$ expected with high probability for the query and $O(\log_B \log n )$ expected amortized for the updates. In both cases, our data structures are deterministic and the expectation is derived from a probabilistic distribution of the $x$-coordinates, and an expected analysis of updates of points with respect to their $y$-coordinates.

By the third solution, we attempt to improve the expected query complexity and simultaneously preserve the update and space
complexity. In order to do that, we restrict the $x$-coordinate distribution to be $(f(n),g(n))$-\emph{smooth}, for appropriate
functions $f$ and $g$ depending on the model, and the $y$-coordinate distribution to belong to a more restricted class of
distributions. The smooth distribution is a superset of uniform and regular distributions. The restricted class contains
realistic distributions such as the Zipfian and the Power Law. The internal variant consists of two levels, of which the lower level is identical to that of the first solution. We implement the upper level with a static Modified Priority Search Tree \cite{SMKLTTV04}. For the external variant, in order to achieve the desired bounds, we introduce three levels. The lower level is again identical to that of the first solution, while the middle level consists of $O(B)$ size buckets. For the upper level we use an External Modified Priority Search Tree, introduced here for the first time. The latter is a straight forward externalization of the Modified Priority Search Tree and is static as well. In order to make these trees dynamic we use the technique of global rebuilding \cite{LO88}. The internal version reduces the query complexity to $O(\log \log n + t)$ expected with high probability and the external to $O(\log_B \log n + t/B)$ I/Os expected with high probability. The results are summarized in Table 1. 

\section{Data Structures and Probability Distributions}
\label{prel}
For the main memory solutions we consider the RAM model of computation.
We denote by $n$ the number of elements that reside in the data structures and by $t$ the size of the query.
The universe of elements is denoted by $S$. When we mention that a data structure performs an
operation in an \emph{amortized expected with high probability} complexity, we mean the bound is expected to
be true with high probability, under a worst case sequence of insertions and deletions of points.

For the external memory solutions we consider the I/O model of computation \cite{V01}. That means that the input resides in the
external memory in a blocked fashion. Whenever a computation needs to be performed to an element, the block of size $B$ that
contains that element is transferred into main memory, which can hold at most $M$ elements. Every computation that is performed in
main memory is free, since the block transfer is orders of magnitude more time consuming. Unneeded blocks that reside in the main
memory are evicted by a LRU replacement algorithm. Naturally, the number of block transfers (\emph{I/O operation}) consists the
metric of the I/O model.

Furthermore, we will consider that the points to be inserted are continuously drawn by specific distributions, presented in
the sequel. The term \emph{continuously} implies that the distribution from which we draw the points remains unchanged. Since the
solutions are dynamic, the asymptotic bounds are given with respect to the current size of the data structure. Finally, deletions
of the elements of the data structures are assumed to be uniformly random. That is, every element present in the data structure is
equally likely to be deleted \cite{K77}.

\subsection{Probability Distributions}
In this section, we overview the probabilistic distributions that will be used in the remainder of the paper.
We will consider that the $x$ and $y$-coordinates are distinct elements of these distributions and will choose
the appropriate distribution according to the assumptions of our constructions.

A probability distribution is \emph{$\mu$-random} if the elements are drawn randomly with respect
to a density function denoted by $\mu$. For this paper, we assume that $\mu$ is unknown.

Informally, a distribution defined over an interval $I$ is
\textit{smooth} if the probability density over any subinterval of $I$
does not exceed a specific bound, however small this subinterval is
(i.e., the distribution does not contain sharp peaks). Given two functions $f_{1}$ and $f_{2}$, a density function
$\mu=\mu[a,b](x)$ is {\em $(f_{1},f_{2})$-smooth} \cite{MT93, AM93} if there
exists a constant $\beta$, such that for all $c_{1},c_{2},c_{3}$, $a
\leq c_{1} < c_{2} < c_{3} \leq b$, and all integers $n$, it holds that:

$$ \int^{c_{2}}_{c_{2}-\frac{c_{3}-c_{1}}{f_{1}(n)}}
  {\mu[c_{1},c_{3}](x)dx} \leq \frac{\beta \cdot f_{2}(n)}{n} $$

\noindent
where $\mu[c_{1},c_{3}](x)=0$ for $x<c_1$ or $x>c_3$, and
$\mu[c_{1},c_{3}](x)=\mu(x)/p$ for $c_1\le x\le c_3$ where
$p=\int_{c_1}^{c_3}\mu(x) dx$. Intuitively, function $f_1$ partitions
an arbitrary subinterval $[c_1,c_3] \subseteq [a,b]$ into $f_1$ equal
parts, each of length $\frac{c_{3}-c_{1}}{f_1}= O(\frac{1}{f_1})$;
that is, $f_1$ measures how fine is the partitioning of an arbitrary
subinterval. Function $f_2$ guarantees that no part, of the $f_1$
possible, gets more probability mass than $\frac{\beta \cdot f_2}{n}$;
that is, $f_2$ measures the sparseness of any subinterval
$[c_{2}-\frac{c_{3}-c_{1}}{f_1},c_2] \subseteq [c_1,c_3]$. The class
of $(f_{1},f_{2})$-smooth distributions (for appropriate choices of
$f_1$ and $f_2$) is a superset of both regular and uniform classes of
distributions, as well as of several non-uniform classes
\cite{AM93,KMSTTZ03}. Actually, {\em any} probability distribution is
$(f_{1},\Theta(n))$-smooth, for a suitable choice of $\beta$.

The \emph{grid distribution} assumes that the elements are integers that belong to a specific range $[1,M]$.

We define the \emph{restricted class} of distributions as the class that contains distributions used in practice,
such as the Zipfian, Power Law, e.t.c..

The \emph{Zipfian} distribution is a distribution of probabilities of
occurrence that follows Zipf's law. Let $N$ be the number of elements,
$k$ be their rank and $s$ be the value of the exponent characterizing
the distribution. Then Zipf's law is defined as the function $f(k;s,N)
= \frac{1/k^s}{\Sigma_{k=1}^N 1/n^s}$. Intuitively, few elements occur
very often, while many elements occur rarely.

The \emph{Power Law} distribution is a distribution over probabilities that
satisfy $Pr[X \geq x] = cx^{-b}$ for constants $c, b > 0$.

\subsection{Data Structures}
\label{DS}
In this section, we describe the data structures that we will combine
in order to achieve the desired complexities.

\subsubsection{Priority Search Trees:}
The classic \emph{Priority Search Tree (PST)} \cite{MC81} stores
points in the 2-d space. One of the most important operations that the
PST supports is the {\em 3-sided query}. The 3-sided query consists of
a half bounded rectangle $[a,b]\times(-\infty,c]$ and asks for
all points that lie inside this area. Note that by rotation we can
unbound any edge of the rectangle. The PST supports this operation in
$O(\log{n}+t)$ w.c., where $n$ is the number of points and $t$ is the
number of the reported points.

The PST is a combination of a search tree and a priority queue. The
search tree (an $(a,b)$-tree suffices) allows the efficient support of
searches, insertions and deletions with respect to the $x$-coordinate,
while the priority queue allows for easy traversal of points with
respect to their $y$-coordinate. In particular, the leaves of the PST
are the points sorted by $x$-coordinate. In the internal nodes of the
tree there are artificial values which are used for the efficient
searching of points with respect to their $x$-coordinate. In addition,
each internal node stores a point that has the minimum $y$-coordinate
among all points stored in its subtree. This corresponds to a
tournament on the leaves of the PST. For example, the root of the PST
contains a point which has minimum $y$-coordinate among all points in
the plane, as well as a value which is in the interval defined between
the $x$-coordinates of the points stored in the rightmost leaf of the
left subtree and the leftmost leaf of the right subtree (this is true
in the case of a binary tree). A PST implemented with an red-black tree
supports the operations of insertion of a new point, deletion of an
existing point and searching for the $x$-coordinate of a point in
$O(\log{n})$ worst case time.

Regarding the I/O model, after several attempts, a worst case optimal
solution was presented by Arge et al. in \cite{ASV99}. The proposed
indexing scheme consumes $O(n/B)$ space, supports updates in $O(\log_B
n)$ amortized I/Os and answers 3-sided range queries in $O(\log_B n + t/B)$ I/Os. We
will refer to this indexing scheme as the \emph{External Priority Search Tree
(EPST)}.

\subsubsection{Interpolation Search Trees:}

In \cite{KMSTTZ06}, a dynamic data structure based on interpolation search (IS-Tree) was presented,
which consumes linear space and can be updated in $O(1)$ time w.c. Furthermore, the elements can be searched
in $O(\log \log n)$ time expected w.h.p., given that they are drawn from a $(n^{\alpha}, n^{\beta})$-smooth distribution,
for any arbitrary constants $0< \alpha , \beta < 1$. The externalization of this data structure,
called interpolation search B-tree (ISB-tree), was introduced in \cite{KMMSTTZ05}. It supports update operations in $O(1)$
worst-case I/Os provided that the update position is given and search operations in $O(\log_B \log n)$ I/Os expected w.h.p.
The expected search bound holds w.h.p. if the elements are drawn by a $(n/(\log \log n)^{1+\epsilon}, n^{1-\delta})$-smooth
distribution, where $\epsilon > 0$ and $\delta = 1 -\frac{1}{B}$ are constants. The worst case search bound
is $O(\log_{B}n)$ block transfers.

\subsubsection{Weight Balanced Exponential Tree:}
\label{wbest}
The {\em exponential search tree} is a technique for converting static polynomial space search structures for ordered sets into fully-dynamic linear space data structures. It was introduced in~\cite{A96,T98,AT07} for searching and updating a dynamic set $U$ of $n$ integer keys in linear space and optimal $O$$($$\sqrt{\smash\log n \smash /\smash\log\smash\log n}$$)$ time in the RAM model. Effectively, to solve the dictionary problem, a doubly logarithmic height search tree is employed that stores static local search structures of size polynomial to the degree of the nodes. 

Here we describe a variant of the exponential search tree that we dynamize using a rebalancing scheme relative to that of the {\em weight balanced search trees}~\cite{AV96}. In particular, a {\em weight balanced exponential tree} $T$ on $n$ points is a leaf-oriented rooted search tree where the degrees of the nodes increase double exponentially on a leaf-to-root path. All leaves have the same depth and reside on the lowest level of the tree (level zero). The {\em weight} of a subtree $T_u$ rooted at node $u$ is defined to be the number of its leaves. If $u$ lies at level $i\geq 1$, the weight of~$T_u$ ranges within $\left[ \frac{1}{2} \cdot w_{i} + 1 , 2 \cdot w_{i} -1 \right]$, for a {\em weight parameter} $w_i = c_1^{\smash {c_2^{i}}}$ and constants $c_2>1$ and $c_1\geq 2^{3 / (c_2 -1)}$ (see Lem.~\ref{rebal}). Note that $w_{i+1} = w_i^{c_2}$. The root does not need to satisfy the lower bound of this range. 
The tree has height~$\Theta (\log_{c_2}\log_{c_1} n)$.

The insertion of a new leaf to the tree increases the weight of the nodes on the leaf-to-root path by one. This might cause some weights to exceed their range constraints (``overflow''). We {\em rebalance} the tree in order to revalidate the constraints by a leaf-to-root traversal, where we ``split'' each node that overflowed. An overflown node $u$ at level $i$ has weight $2 w_i$. A split is performed by creating a new node $v$ that is a sibling of $u$ and redistributing the children of $u$ among $u$ and $v$ such that each node acquires a weight within the allowed range. In particular, we scan the children of $u$, accumulating their weights until we exceed the value~$w_i$, say at child $x$. Node $u$ gets the scanned children and $v$ gets the rest. Node~$x$ is assigned as a child to the node with the smallest weight. Processing the overflown nodes $u$ bottom up guarantees that, during the split of~$u$, its children satisfy their weight constraints.

The deletion of a leaf might cause the nodes on the leaf-to-root path to ``underflow'', i.e. a node $u$ at level $i$ reaches weight $\frac{1}{2} w_i$. By an upwards traversal of the path, we discover the underflown nodes. In order to revalidate their node constraints, each underflown node chooses a sibling node $v$ to ``merge'' with. That is, we assign the children of $u$ to $v$ and delete $u$. Possibly, $v$ needs to ``split'' again if its weight after the merge is more than $\frac{3}{2} w_i$ (``share''). In either case, the traversal continues upwards, which guarantees that the children of the underflown nodes satisfy their weight constraints. The following lemma, which is similar to~\cite[Lem. 9]{AV96}, holds.
%

\begin{lemma}
\label{rebal}
After rebalancing a node $u$ at level $i$, $\Omega(w_i)$ insertions or deletions need to be performed on $T_u$, for $u$ to overflow or underflow again.
\end{lemma}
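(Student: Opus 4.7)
The plan is to show that immediately after any rebalancing operation at level $i$, the rebalanced node(s) have weight that is bounded away by $\Theta(w_i)$ from both the overflow threshold $2w_i$ and the underflow threshold $\tfrac{1}{2} w_i$. Since every insertion or deletion propagating through $T_u$ changes the weight of $u$ by exactly one, a linear-in-$w_i$ gap in either direction forces $\Omega(w_i)$ updates before $u$ can overflow or underflow again.

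The first step is to exploit the parameter choice. From $w_i = c_1^{c_2^i}$ and $c_1 \geq 2^{3/(c_2-1)}$, I would derive the key numeric inequality
\[
\frac{w_i}{w_{i-1}} \;=\; c_1^{c_2^{i-1}(c_2-1)} \;\geq\; c_1^{c_2-1} \;\geq\; 8,
\]
so that the maximum weight $2 w_{i-1} - 1$ of a level-$(i-1)$ child is at most $w_i/4$. This is the ``discretization error'' controlling how closely a scan-and-cut split can approximate a prescribed target weight.

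The second step is a small case analysis over the three rebalancing primitives, splitting, merging, and sharing. For a split of an overflown node of weight $2 w_i$, the scanning rule stops at the first child $x$ whose inclusion makes the running sum exceed $w_i$; reassigning $x$ to the lighter side, both new nodes end up with weight in $[w_i - 2w_{i-1}+1,\, w_i + 2w_{i-1}-1] \subseteq [\tfrac{3}{4}w_i + 1,\, \tfrac{5}{4}w_i - 1]$ by Step~1. For a merge of an underflown node (weight $\tfrac{1}{2} w_i$) with a sibling of weight in $[\tfrac{1}{2} w_i + 1, 2 w_i - 1]$ whose combined weight is at most $\tfrac{3}{2} w_i$, the merged node has weight in $[w_i + 1,\, \tfrac{3}{2} w_i]$. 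For a share, the combined weight lies in $(\tfrac{3}{2} w_i, \tfrac{5}{2} w_i - 1]$ and the same scan-and-cut argument, applied with target $w_i$ rather than half the total, again places both resulting nodes in the window $[\tfrac{3}{4} w_i + 1,\, \tfrac{5}{4} w_i - 1]$ once the $\pm 2 w_{i-1}$ slack is absorbed.

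In every case the post-rebalancing weight lies in an interval $[\alpha w_i, \beta w_i]$ with $\alpha > \tfrac{1}{2}$ and $\beta < 2$, with both gaps $\alpha - \tfrac{1}{2}$ and $2 - \beta$ being positive constants (here at least $\tfrac{1}{4}$). Hence at least $(\alpha - \tfrac{1}{2})w_i = \Omega(w_i)$ deletions are needed before $u$ underflows, and at least $(2-\beta)w_i = \Omega(w_i)$ insertions before it overflows, which is the claim. The main obstacle is the share case: the merged total is not necessarily close to $2 w_i$, so one must verify that cutting at the first prefix exceeding $w_i$ still yields two nodes within a constant multiple of $w_i$ on either side; this is exactly where the bound $w_{i-1} \leq w_i/8$ obtained from $c_1 \geq 2^{3/(c_2-1)}$ is used. \qed
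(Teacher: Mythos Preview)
Your approach is exactly the paper's: derive $w_{i-1}\le w_i/8$ from $c_1\ge 2^{3/(c_2-1)}$, case-analyse split/merge/share, and conclude the post-rebalance weight sits inside $(\tfrac12 w_i,2w_i)$ with $\Theta(w_i)$ slack on each side. The paper compresses all three cases into the single claimed interval $[\tfrac34 w_i-w_{i-1},\tfrac32 w_i+w_{i-1}]\subseteq[\tfrac58 w_i,\tfrac{14}{8}w_i]$, which is the coarser version of your case-by-case windows.

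One slip in your share analysis: if you really cut at target $w_i$ as you write, then when the merged weight $W$ is only just above $\tfrac32 w_i$ the second piece receives roughly $W-w_i\approx\tfrac12 w_i$, not something in $[\tfrac34 w_i,\tfrac54 w_i]$; that would leave no $\Omega(w_i)$ cushion against underflow. The window $[\tfrac34 w_i+1,\tfrac54 w_i-1]$ you state is precisely what one gets by cutting at $W/2$ (since $W/2\in(\tfrac34 w_i,\tfrac54 w_i)$), and this is also what the paper's lower endpoint $\tfrac34 w_i-w_{i-1}$ reflects. So either change ``target $w_i$'' to ``target $W/2$'' in the share case, or simply quote the paper's uniform interval; the rest of your argument is then fine.
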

\begin{proof}
A split, a merge or a share on a node $u$ on level $i$ yield nodes with weight in $\left[ \frac{3}{4}w_i - w_{i-1}, \frac{3}{2}w_i + w_{i-1} \right]$. If we set $w_{i-1}\leq \frac{1}{8}w_i$, which always holds for $c_1\geq 2^{3/(c_2 -1)}$, this interval is always contained in $[ \frac{5}{8}w_i, \frac{14}{8}w_i]$.
\qed
\end{proof}


\begin{figure}
\label{fig:mpst}
\includegraphics[scale=0.70]{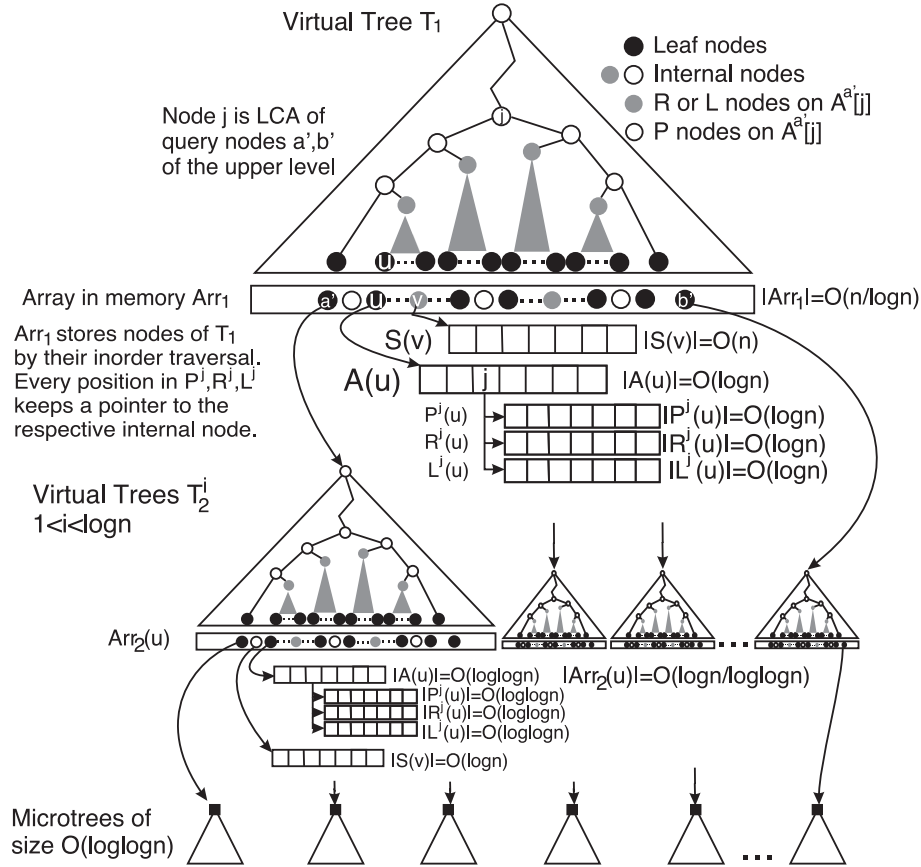}
\caption{The linear space MPST.}
\end{figure}

\subsubsection{Range Minimum Queries}
\label{RMQ}
The {\em range minimum query} (RMQ) problem asks to preprocess an array of size $n$ such that, given an index range, one can report the position of the minimum element in the range. In~\cite{HT84} the RMQ problem is solved in $O(1)$ time using $O(n)$ space and preprocessing time. The currently most space efficient solution that support queries in $O(1)$ time appears in~\cite{FH07}.

\subsubsection{Dynamic External Memory 3-sided Range Queries for $O(B^2)$ Points.}
For external memory, Arge et al.~\cite{ASV99} present the following lemma for handling a set of at most $B^2$ points.

\begin{lemma}
\label{rebal}
A set of $K \leq B^2$ points can be stored in $O(K/B)$ blocks, so that 3-sided queries need $O(t/B + 1)$ I/Os and updates $O(1)$ I/Os, for output size~$t$
\end{lemma}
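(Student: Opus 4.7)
The plan is to mimic in external memory the internal-memory priority search tree of McCreight, but in a ``flat'' two-level form that exploits the assumption $K\leq B^{2}$ so that the tree has constant I/O height.

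First I would sort the $K$ points by $x$-coordinate and partition them into $m=\lceil K/B\rceil$ consecutive groups $G_{1},\ldots,G_{m}$ of $\Theta(B)$ points each; every $G_{i}$ is stored in a single disk block. Because $K\leq B^{2}$ we have $m\leq B$, so a single ``root'' block suffices to hold, for each $i$, the $x$-interval covered by $G_{i}$, a pointer to the block storing $G_{i}$, and a constant amount of $y$-priority summary information (in the simplest version, the minimum $y$ in $G_{i}$ together with a small set of tournament winners hoisted from $G_{i}$, in the style of the internal PST). This gives the $O(K/B)$ space bound directly.

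For a 3-sided query $[a,b]\times(-\infty,c]$ I would proceed top-down. Reading the root costs $O(1)$ I/Os and lets me identify the leftmost group $G_{\ell}$ and rightmost group $G_{r}$ whose $x$-range meets $[a,b]$. The two \emph{boundary} groups $G_{\ell}$ and $G_{r}$ are read in full and filtered in memory, accounting for $O(1)$ I/Os. For each \emph{interior} group $G_{i}$ with $\ell<i<r$, I would first consult the priority summary at the root: if the summary certifies that no point of $G_{i}$ has $y\leq c$ I skip $G_{i}$ altogether; otherwise, I read the block of $G_{i}$ and report its qualifying points. The total I/O cost is bounded by $O(1)$ (root plus boundary groups) plus the number of interior groups that actually contribute output.

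For updates, I would locate the affected group through the root in $O(1)$ I/Os, rewrite that single group block, and refresh the root's summary entry for the group, all in $O(1)$ I/Os. If a group over- or under-flows by a constant factor, it is split or merged with a neighbour; since both the root and each group fit in $O(1)$ blocks, one such rebalancing step also costs $O(1)$ I/Os, and the constant height guarantees that no cascade is possible.

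The main obstacle is making the query genuinely output-sensitive, i.e. ruling out a scenario in which many interior groups each have a minimum $y\leq c$ but contribute only a handful of points, so that the summed cost exceeds $O(t/B+1)$. This is precisely the role of the PST-style heap order: the summary stored at the root should not merely be ``the minimum $y$ of $G_{i}$'', but a set of tournament winners hoisted from $G_{i}$ in decreasing priority, so that any interior group actually opened either contributes $\Omega(B)$ points to the output or is unnecessary in the sense that its qualifying points were already reported from the root's hoisted set. A charging argument of this flavour — standard for priority search trees and used in~\cite{ASV99} — converts the cost of interior reads into an $O(t/B)$ term, yielding the claimed $O(t/B+1)$ query bound.
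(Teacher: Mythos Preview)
The paper does not actually prove this lemma; its entire proof reads ``See Lemma~1 presented in~\cite{ASV99}.'' Your sketch is therefore already more explicit than what the paper offers.

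That said, your proposal has a real gap exactly where you flag it. With $m=\Theta(B)$ groups and a single root block of capacity $\Theta(B)$, you can hoist only $O(1)$ tournament winners per group into the root. Opening an interior group $G_i$ because its hoisted winner satisfies $y\le c$ then guarantees only $\Omega(1)$ output points from $G_i$, not $\Omega(B)$; the charging argument you invoke yields $O(1+t)$ I/Os rather than $O(1+t/B)$. The scenario you worry about---many interior groups each contributing a single point---is not ruled out by a constant number of hoisted winners per group, and the internal-memory PST charging (one reported point pays for $O(1)$ visited nodes) does not automatically become a per-\emph{block} charge. The construction in~\cite{ASV99} is genuinely more involved than a flat catalog with per-group minima; getting $O(t/B+1)$ in $O(K/B)$ space requires an organization in which every block read beyond the first $O(1)$ can be charged to $\Omega(B)$ reported points, which your two-level layout with a single summary block cannot provide. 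Since you ultimately defer to~\cite{ASV99} for this step anyway, your proof and the paper's end up in the same place: a pointer to the original source, yours prefaced by a correct explanation of why the naive approach falls short.
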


\begin{proof}
See Lemma 1 presented in \cite{ASV99}.
\end{proof}

\subsubsection{Modified Priority Search Trees}
A \emph{Modified Priority Search Tree (MPST)} is a static data structure that stores points on the plane
and supports 3-sided queries. It is stored as an array ($Arr$) in memory, yet it can be visualized as a complete binary tree.
Although it has been presented in \cite{KMSTTV00}, \cite{SMKLTTV04} we sketch it here again, in order to introduce its external version.

Let $T$ be a Modified Priority Search Tree (MPST) \cite{SMKLTTV04}
which stores $n$ points of $S$ (see figure 2). We denote by $T_v$ the subtree of
$T$ with root $v$. Let $u$ be a leaf of the tree. Let $P_u$ be the root-to-leaf path for
$u$. 
For every $u$, we sort the points in $P_u$ by their
$y$-coordinate. We denote by $P^j_u$ the subpath of $P_u$ with nodes
of depth bigger or equal to $j$ (The depth of the root is 0). Similarly $L^j_u$ (respectively
$R^j_u$ ) denotes the set of nodes that are left (resp. right)
children of nodes of $P^j_u$ and do not belong to $P^j_u$ . The tree structure $T$ has the following
properties:

\begin{itemize}
\item Each point of $S$ is stored in a leaf of
$T$ and the points are in sorted x-order from left to right.
\item Each internal node $v$ is equipped with a
secondary list $S(v)$. $S(v)$ contains in the points
stored in the leaves of $T_v$ in increasing y-coordinate.
\item A leaf $u$ also stores the following lists $A(u)$, $P^{j}(u)$, $L^j(u)$ and $R^j(u)$, for $0\leq j \leq \log n$.
The list $P^j(u)$, $L^j(u)$ and
$R^j(u)$ store, in increasing $y$-coordinate, pointers to the
respective internal nodes. $A(u)$ is an array that indexes $j$.
\end{itemize}

Note that the first element of the list $S(v)$ is the point of the
subtree $T_v$ with minimum $y$-coordinate. Also note that $0\leq j \leq \log n$,
so there are $\log n$ such sets $P^{j}_u$, $L^j_u$, $R^j_u$ for each leaf $u$.
Thus the size of $A$ is $\log n$ and for a given $j$, any list $P^{j}(u)$, $L^j(u)$ or $R^j(u)$ can be accessed in constant time.
By storing the nodes of the tree $T$ according to their inorder traversal in an array $Arr$ of size $O(n)$,
we can imply the structure of tree $T$. Also each element of $Arr$ contains a binary label that corresponds
to the inorder position of the respective node of $T$, in order to facilitate constant time lowest common ancestor (LCA) queries.

To answer a query with the range $[a, b]\times (-\infty, c]$ we find
the two leaves $u$, $w$ of $Arr$ that contain $a$ and $b$
respectively. If we assume that the leaves that contain $a,b$ are
given, we can access them in constant time.
Then, since $Arr$ contains an appropriate binary label, we use a simple LCA
(Lowest Common Ancestor) algorithm \cite{G94,HT84} to compute the
depth $j$ of the nearest common ancestor of $u$, $w$ in $O(1)$
time. That is done by performing the XOR operation between the binary labels of
the leaves $u$ and $w$ and finding the position of the
first set bit provided that the left-most bit is placed in position
$0$. Afterwards, we traverse $P^{j}(u)$ until the scanned
$y$-coordinate is not bigger than $c$. Next, we traverse $R^{j}(u)$,
$L^{j}(w)$ in order to find the nodes whose stored points have
$y$-coordinate not bigger than $c$. For each such node $v$ we traverse the
list $S(v)$ in order to report the points of $Arr$ that satisfy the
query. Since we only access points that lie in the query, the total query time is $O(t)$, where $t$ is the answer size.

The total size of the lists $S(u)$ for each level of $T$ is $O(n)$. Each of the $O(n)$ leaves stores $\log n$ lists $P_j$,
$L_j$ and $R_j$, each of which consumes $O(\log n)$ space. Thus the
space for these lists becomes $O(n \log^2 n)$. By implementing these
lists as partially persistent sorted lists \cite{BGOSW96}, their total space becomes $O(n \log n)$, resulting in
a total space of $O(n \log n)$ for these lists. Thus, the total space
occupied by $T$ is $O(n \log n)$.

We can reduce the space of the structure by pruning as in
\cite{FMNT87,O88}. However, pruning alone does not reduce the space to
linear. We can get better but not optimal results by applying pruning
recursively. To get an optimal space bound we will use a combination
of pruning and table lookup. The pruning method is as follows: Consider the nodes of $T$, which
have height $\log \log n$. These nodes are roots of subtrees of $T$ of
size $O(\log n)$ and there are $O(n/ \log n)$ such nodes. Let $T_1$ be
the tree whose leaves are these nodes and let $T^i_2$ be the subtrees
of these nodes for $1 \leq i \leq O(n/ \log n)$. We call $T_1$ the first layer of the structure and
the subtrees $T^i_2$ the second layer. $T_1$ and each subtree $T^i_2$
is by itself a Modified Priority Search Tree. Note that $T_1$ has
size $O(n/ \log n) = O(n)$. Each subtree $T^i_2$ has $O(\log n/ \log \log n)$
leaves and depth $O(\log \log n)$. The space for the second layer is $O(n \log n)$.
By applying the pruning method to all the trees of the second layer we
get a third layer which consists of $O(n / \log \log n)$ modified
priority search trees each of size $O(\log \log n)$. Ignoring the third layer, the second layer needs now linear space,
while the $O(n \log n)$ space bottleneck is charged on the third level. If we use table lookup \cite{GT85} to implement
the modified priority search trees of the third layer we can reduce its space to linear, thus consuming linear space in total.

In order to answer a query on the three layered structure we access
the microtrees that contain $a$ and $b$ and extract in $O(1)$ time the
part of the answer that is contained in them. Then we locate the
subtrees $T^i_2$, $T^j_2$ that contain the representative leaves of
the accessed microtrees and extract the part of the answer that is
contained in them by executing the query algorithm of the MPST. The
roots of these subtrees are leaves of $T_1$. Thus we execute again the
MPST query algorithm on $T_1$ with these leaves as arguments. Once we reach the node with $y$-coordinate bigger than $c$,
we continue in the same manner top down. This may lead us to subtrees of the
second layer that contain part of the answer and have not been
accessed yet. That means that for each accessed tree of the second layer, we execute the MPST query algorithm,
where instead of $a$ and $b$, we set as arguments the minimum and the maximum $x$-coordinates of all the points stored
in the queried tree. The argument $c$ remains, of course, unchanged. Correspondingly, in that way we access the microtrees
of the third layer that contain part of the answer. We execute the top down part of the algorithm on them,
in order to report the final part of the answer.

\begin{lemma}
Given a set of $n$ points on the plane we can store them in a static data
structure with $O(n)$ space that allows three-sided range queries to
be answered in $O(t)$ worst case, where $t$ is the answer size.
\end{lemma}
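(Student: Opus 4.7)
The plan is to put together the three-layer pruned MPST exactly as sketched in the preceding discussion, and argue its two invariants: linear space and output-sensitive query time. I would first verify the base (unpruned) structure, then analyse the space level by level, and finally trace the query algorithm across the layers.

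First I would fix the base structure: the MPST $T$ on $n$ points with secondary lists $S(v)$ at every internal node and the $y$-sorted pointer lists $P^j(u), L^j(u), R^j(u)$ and index array $A(u)$ at every leaf. For the query $[a,b]\times(-\infty,c]$, I would argue correctness by the standard decomposition: locate the leaves $u,w$ containing $a,b$, obtain in $O(1)$ the depth $j$ of their LCA via an XOR on the inorder-position binary labels in $Arr$, and then walk $P^j(u)$, $R^j(u)$, $L^j(w)$ until a node whose stored minimum $y$ exceeds $c$; for every surviving node $v$, scan $S(v)$ until the $y$-threshold is crossed. Since every accessed node contributes at least one reported point (its stored minimum-$y$ point lies in the query), the total work is $O(t)$.

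Next I would address space. The naive analysis gives $O(n\log n)$ because the $\log n$ leaf lists have total length $O(\log n)$ each, and I would immediately invoke partial persistence on the $P^j, L^j, R^j$ lists as in~\cite{BGOSW96}, which shares suffixes and reduces that contribution to $O(n\log n)$ overall, while each $S(v)$ level still has $O(n)$ total size. The key reduction then is the three-layer pruning: let $T_1$ be the MPST on the $O(n/\log n)$ roots of the height-$\log\log n$ subtrees, and let $\{T_2^i\}$ be those subtrees, each an MPST on $O(\log n/\log\log n)$ leaves. Plugging the sizes into the $O(m\log m)$ bound for an MPST on $m$ points makes $T_1$ use $O(n)$ space; the second layer still costs $O(n\log n)$, so I would prune it again, obtaining a third layer of $O(n/\log\log n)$ micro-MPSTs of size $O(\log\log n)$. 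The second layer now collapses to linear, and for the third layer I would use the standard table-lookup trick of~\cite{GT85}: there are only $2^{O(\log\log n)}=\mathrm{poly}\log n$ distinct micro-shapes, so a single universal table of $o(n)$ bits suffices to represent all of them and to answer queries inside any single micro-tree in $O(1+t')$ time where $t'$ is the local output. Summing the three layers gives $O(n)$ total space.

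Finally I would verify that layering does not inflate the query time. The algorithm enters at most two micro-trees (those containing $a$ and $b$), extracts their local contributions via table lookup, then lifts to the two second-layer trees $T_2^i, T_2^j$ containing these micro-roots, executes the MPST query there, and finally executes the MPST query on $T_1$ using the two second-layer roots as endpoints. For every second- or third-layer subtree that is entirely to be reported (discovered top-down from $T_1$) I would invoke the local MPST query with $[-\infty,+\infty]\times(-\infty,c]$ so that the work inside it is again $O(1 + t'')$. Because every node touched in any layer still witnesses at least one reported point, the total cost telescopes to $O(t)$. The one step that needs care, and which I regard as the main obstacle, is this accounting across layers: I must make sure that the $O(1)$ start-up cost at each entered subtree is absorbed either by a reported point contained in it or by the $O(1)$ endpoints $a,b$, so that nothing resembling an $O(\log\log n)$ additive term survives in the bound.
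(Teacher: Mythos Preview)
Your proposal is correct and follows essentially the same construction the paper sketches in the paragraphs preceding the lemma: the MPST with $S(v)$ and the $P^j,L^j,R^j$ lists, persistence to bring the space to $O(n\log n)$, two rounds of pruning, and table lookup on the third layer. The paper's own ``proof'' is simply the citation \cite{SMKLTTV04}, so you have in fact reconstructed the argument that the paper chose to delegate; the cross-layer accounting concern you flag is real but is handled exactly as you outline, by charging each entered subtree either to a reported point or to one of the two endpoints $a,b$.
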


\begin{proof}
See \cite{SMKLTTV04}.
\end{proof}

The \emph{External Modified Priority Search Tree (EMPST)} is similar to the MPST, yet we store the lists in a blocked fashion.
In order to attain linear space in external memory we prune the structure $k$ times, instead of two times.
The pruning terminates when $\log^{(k)} n = O(B)$. Since computation within a block is free, we do not need the additional
layer of microtrees. By that way we achieve $O(n/B)$ space.

Assume that the query algorithm accesses first the two leaves $u$ and $v$ of the $k$-th layer of the EMPST,
which contain $a$ and $b$ respectively. If they belong to different EMPSTs of that layer, we recursively take the roots
of these EMPSTs until the roots $r_u$ and $r_v$ belong to the same EMPST, w.l.o.g. the one on the upper layer.
That is done in $O(k)=O(1)$ I/Os. Then, in O(1) I/Os we access the $j$-th entry of $A(r_u)$ and $A(r_v)$,
where $j$ is the depth of $LCA(r_u,r_v)$, thus also the corresponding sublists $P^j(r_u), R^j(r_u), L^j(r_u)$
and $P^j(r_v), R^j(r_v), L^j(r_v)$. Since the\-se sublists are $y$-ordered, by scanning them in $t_1/B$ I/Os
we get all the $t_1$ pointers to the $S$-lists that contain part of the answer. We access the $S$-lists in $t_1$ I/Os
and scan them as well in order to extract the part of the answer (let's say $t_2$) they contain.
We then recursively access the $t_2$ $S$-lists of the layer below and extract the part $t_3$ that resides on them.
In total, we consume $t_1/B + t_1\cdot t_2/B + ... + t_{i-1} \cdot t_{i}/B + ... + t_{k-1} \cdot t_{k}/B$ I/Os.
Let $p_i$ the probability that $t_i=t^{p_i}$ where $t$ is the total size of the answer and
$\sum_{i=1}^k p_i=1$. Thus, we need $t^{p_1}/B + \sum_{i=1}^{k-1} \frac{t^{p_i}}{B}\cdot t^{p_{i+1}}$ I/Os or
$t^{p_1}/B + \sum_{i=1}^{k-1} \frac{t^{(p_i+p_{i+1})}}{B}$ I/Os.
Assuming w.h.p. an equally likely distribution of answer amongst the $k$ layers,
we need $t^{\frac{1}{k}}/B + \sum_{i=1}^{k-1} \frac{t^{\frac{1}{k}+ \frac{1}{k}}}{B}$ expected number
of I/Os or $t^{\frac{1}{k}}/B + \sum_{i=1}^{k-1} \frac{t^{\frac{2}{k}}}{B}$.
Since $k>>2$, we need totally $O(t/B)$ expected w.h.p. number of I/Os.

\begin{lemma}
Given a set of $n$ points on the plane we can store them in a static data
structure with $O(n/B)$ space that allows three-sided range queries to
be answered in $O(t/B)$ expected w.h.p. case, where $t$ is the size of the
answer.
\end{lemma}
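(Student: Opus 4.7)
The plan is to externalize the layered MPST construction described above. First, I would set up the structure by pruning $k$ times instead of twice: at each level the roots of subtrees of height $\log \log \cdots \log n$ (appropriate iterated log) become the leaves of the layer above, and the recursion is stopped at the level where $\log^{(k)} n = O(B)$. At that bottom layer a whole subtree fits in $O(1)$ blocks, so since intra-block computation is free the table-lookup ``microtree'' trick used in the RAM version is unnecessary. I would then verify the $O(n/B)$ space bound layer by layer: the top layer has $O(n/\log n)$ leaves and uses linear space as an MPST; each intermediate layer's $S$-lists and persistent $P^j, L^j, R^j$ lists are stored in blocked form, and a direct adaptation of the $O(n \log n)$ RAM space accounting, combined with the $k$-fold pruning that shaves off a $\log$ factor each time, gives $O(n/B)$ after $k$ levels. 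The $A$ arrays keep constant-time (hence $O(1)$-I/O) access to the relevant $j$-indexed sublist at each node.

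Next I would describe the query. Given the query $[a,b]\times(-\infty,c]$, locate the leaves $u,v$ containing $a,b$ at the bottom layer, then walk upward layer by layer until the two ascending roots $r_u, r_v$ lie in the same EMPST; this costs $O(k)=O(1)$ I/Os. Compute the depth $j$ of $\mathrm{LCA}(r_u,r_v)$ by an XOR on the stored inorder labels, fetch $A(r_u)[j]$ and $A(r_v)[j]$ to get the sublists $P^j, L^j, R^j$, and scan these $y$-ordered sublists until the $y$-coordinate exceeds $c$. Each pointer produced points to an $S$-list whose prefix (up to $c$) is part of the output, and every such $S$-list potentially triggers a recursive query in the layer below, invoked with the $x$-range of the accessed subtree rather than $[a,b]$.

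For the I/O analysis I would write the total cost as a telescoping-style sum $t_1/B + (t_1 t_2)/B + \cdots + (t_{k-1} t_k)/B$, where $t_i$ is the number of output points discovered at layer $i$ and $\sum_i t_i = t$. Using the distributional assumption (w.h.p.\ the output splits evenly across layers, i.e.\ $t_i = t^{1/k}$), each product $t_i t_{i+1}$ becomes $t^{2/k}$, and since $k$ can be chosen large enough that $2/k < 1$ and $k = O(1)$, the whole sum collapses to $O(t/B)$. I would conclude by summing with the $O(1)$ navigation cost to obtain the stated $O(t/B)$ expected w.h.p.\ bound.

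The main obstacle is the last step: the informal claim that the output distributes evenly across the $k$ layers with high probability is what drives the analysis, and justifying it rigorously is delicate. I would handle it by tying the layer in which a point ``lands'' to the position of its $x$-coordinate within the pruning hierarchy and invoking the smoothness assumption on the $x$-coordinate distribution (and the restricted distribution on $y$-coordinates) used throughout the paper; a Chernoff-type tail bound on the counts $t_i$ then yields the concentration needed to replace the worst-case product $t_i t_{i+1}$ by $t^{2/k}$ with high probability. Alternatively, I can bound each $t_i t_{i+1} \le t_i^2 + t_{i+1}^2$ and show the sum of squares of a partition of $t$ into $k = \omega(1)$ roughly balanced parts is $O(t)$, so the overall cost remains $O(t/B)$.
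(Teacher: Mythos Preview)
Your proposal follows essentially the same route as the paper: prune the MPST $k$ times until $\log^{(k)} n = O(B)$, drop the microtree layer since in-block work is free, run the same layered query, write the I/O cost as $t_1/B + t_1 t_2/B + \cdots + t_{k-1}t_k/B$, and collapse it to $O(t/B)$ under the assumption that the answer is spread evenly over the $k$ layers so that each $t_i \approx t^{1/k}$. The paper parametrizes this via $t_i = t^{p_i}$ with $\sum p_i = 1$ (so $\prod t_i = t$ rather than your $\sum t_i = t$), but otherwise the argument and its level of rigor are the same; you are right that the ``equally likely distribution of the answer among layers'' step is the delicate one, and the paper also simply asserts it rather than proving it. One small caveat: your alternative via $t_i t_{i+1} \le t_i^2 + t_{i+1}^2$ does not salvage the bound, since for a balanced partition $\sum t_i^2 \approx t^2/k$, which is not $O(t)$ unless $k \ge t$.
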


\section{Expected First Order Statistic of Uknown Distributions}
\label{theorems}
In this section, we prove two theorems that will ensure the expected running times of our constructions.
They are multilevel data structures, where for each pair of levels, the upper level indexes representative elements
(in our case, point on the plane) of the lower level buckets.
We call an element \emph{violating} when its insertion to or deletion from the lower level bucket causes
the representative of that bucket to change, thus triggering an update on the upper level.
We prove that for an epoch of $O(\log n)$ updates, the number of violating elements is $O(1)$ if they are
continuously being drawn from a $\mu$-random distribution. Secondly, we prove that for a broader epoch of $O(n)$ updates,
the number of violating elements is $O(\log n)$, given that the elements are being continuously drawn from a distribution
that belongs to the restricted class. Violations are with respect to the $y$-coordinates, while the distribution
of elements in the buckets are with respect to $x$-coordinates.

But first, the proof of an auxiliary lemma is necessary. Assume a sequence $\mathcal{S}$ of distinct numbers generated
by a continuous distribution $\mu = \mathcal{F}$ over a universe $\mathcal{U}$. Let $|\mathcal{S}|$ denote the size
of $\mathcal{S}$. Then, the following holds:

\begin{lemma} \label{lmm:min}
The probability that the next element $q$ drawn from $\mathcal{F}$ is less than the minimum element $s$ in
$\mathcal{S}$ is equal to $\frac{1}{|\mathcal{S}|+1}$.
\end{lemma}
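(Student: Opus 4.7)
The plan is to establish the claim through a clean exchangeability argument, exploiting the fact that $\mathcal{F}$ is continuous. View the $|\mathcal{S}|+1$ draws consisting of the elements of $\mathcal{S}$ together with the new sample $q$ as $|\mathcal{S}|+1$ i.i.d.\ samples from $\mathcal{F}$. Continuity of $\mathcal{F}$ ensures that for any two draws $X_i, X_j$ we have $\Pr[X_i = X_j] = 0$, so almost surely all $|\mathcal{S}|+1$ values are distinct and the rank of each element in the combined sample is well defined.

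Next, I would rewrite the event of interest. The event ``$q$ is smaller than the minimum of $\mathcal{S}$'' is, modulo a probability-zero set of ties, identical to the event ``$q$ is the minimum of $\mathcal{S} \cup \{q\}$'', i.e.\ $q$ has rank $1$ among the $|\mathcal{S}|+1$ i.i.d.\ draws. So it suffices to compute the probability that $q$ attains rank $1$ in this combined sample.

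By exchangeability of i.i.d.\ samples, the joint distribution of $(X_1,\dots,X_{|\mathcal{S}|},q)$ is invariant under any permutation of its coordinates. In particular, each of the $|\mathcal{S}|+1$ draws is equally likely to be the unique minimum, which gives probability exactly $\frac{1}{|\mathcal{S}|+1}$ that this minimum is $q$. Combining this with the event identification above yields the claimed probability.

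The only subtle point, and thus the one I would be most careful about, is the handling of ties: the symmetry argument produces a uniform distribution over ranks only once ties have been ruled out, and this is where the continuity hypothesis on $\mathcal{F}$ is essential. Everything else is a direct appeal to exchangeability, with no calculation needed. I expect the full write-up to be only a few lines.
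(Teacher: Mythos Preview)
Your proof is correct and takes a genuinely different route from the paper. The paper argues by direct integration: it writes $P\{X_{n+1} < X_{(1)}\}$ as a conditional expectation against the density of the first order statistic $X_{(1)}$, namely $n f(x)(1-F(x))^{n-1}$, and then evaluates the resulting integral $\int n f(x) F(x) (1-F(x))^{n-1}\,dx$ via integration by parts to obtain $\frac{1}{n+1}$. Your symmetry argument bypasses all of this computation: by viewing the $n+1$ draws as exchangeable and invoking continuity to rule out ties almost surely, you reduce the claim to the observation that each of $n+1$ i.i.d.\ values is equally likely to be the minimum. Your approach is shorter, more conceptual, and makes the role of the continuity hypothesis transparent; the paper's calculation, while correct, obscures the underlying symmetry and is more error-prone (and indeed the paper's write-up has some notational slips, e.g.\ the integration limits appear as $\int_b^a$). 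Either argument is fully rigorous; yours is preferable for exposition.
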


\begin{proof}
Suppose that we have $n$ random observations $X_1$,\ldots,$X_n$ from
an unknown continuous probability density function $f(X)$, with cumulative
distribution $\mu = F(X)$, $X \in \left[ a, b \right]$. We want to
compute the probability that the $(n+1)-th$ observation is less than
$min\left\{X_1,\ldots,X_n \right\}$.
Let $X_{(1)}=min\left\{X_1,\ldots,X_n \right\}$. Therefore,
$P\left\{ X_{n+1} < X_{(1)} \right\}$ = $\sum_{x} P\left\{ X_{n+1} < X_{(1)} / X_{(1)}=x
\right\} \cdot$ $P \left\{ X_{(1)}=x \right\}$ $\left( \alpha \right)$.

It is easy to see that $P\left\{ X_{n+1} < X_{(1)} / X_{(1)} = x \right\}$ = $F(X)$ =
$P \left\{ X_{n+1} < x \right\}$ $\left( \beta \right)$.
Also $P \left\{ X_{(k)}=x \right\}=n \cdot f(x) \cdot \left(
^{n-1}_{k-1} \right) \cdot {F(X)}^{k-1} \cdot (1-F(X))^{n-k}$ $\left(
\gamma \right)$, where $X_{(k)}$ is the $k-th$ smallest value in
$\left\{ X_1,\ldots,X_n \right\}$.

In our case $k=1$, which intuitively means that we have $n$ choices for one in $\left\{
X_1,\ldots,X_n \right\}$ being the smallest value. This is true if
all the rest $n-1$ are more than $x$, which occurs with probability:
$\left( 1-F(X) \right)^{n-1}=\left( 1-P\left\{ X<x \right\}
\right)^{n-1}$.
By $\left( \beta \right)$ and $\left( \gamma \right)$, expression
$\left( \alpha \right)$ becomes:



\noindent
$P \left\{ X_{n+1} < X_{(1)} \right\}$ = $\int^{a}_{b} n \cdot f(X)\left(
^{n-1}_{k-1} \right) \cdot F(X) \cdot \left( 1-F(X) \right)^{n-1}$ $dX$.
After some mathematical manipulations, we have that:

\noindent
$P \left\{ X_{n+1} < X_{(1)} \right\}=\int^{a}_{b} n \cdot f(X) \cdot
\left( 1-F(X) \right)^{n-1} \cdot F(X)dX= \\
 \int^{a}_{b} {\left[ -
\left(1-F(X)\right)^n \right]}^{'}F(X)$ $dX$ = $\int^{a}_{b} {\left[ -
\left(1-F(X)\right)^n \cdot F(X)\right]}^{'}$ $dX$ $+$ $\int^{a}_{b}
\left(1-F(X)\right)^n \cdot F^{'}(X)dX= \left\{ -\left( 1-F(X)
\right)^n \cdot F(x)|^{b}_{a} \right\}$ $+$ $\int^{a}_{b} -\left[
\frac{\left( 1-F(X) \right)^{n+1}}{n+1} \right]^{'}dX= -\left( 1-F(b)
\right)^n \cdot F(b)+\left( 1-F(a) \right)^n \cdot F(a) - \left\{
\frac{\left( 1-F(X) \right)^{n+1}}{n+1} |^{b}_{a} \right\}= -\left\{
\frac{\left( 1-F(b) \right)^{n+1}}{n+1}-\frac{\left( 1-F(a)
\right)^{n+1}}{n+1}\right\} =\frac{1}{n+1}$
\end{proof}

Apparently, the same holds if we want to maintain the maximum element of the set $\mathcal{S}$.

\begin{prop} \label{Prop:randomness-invarince}
Suppose that the input elements have their $x$-coordinate generated by an arbitrary continuous distribution
$\mu$ on $[a, b] \subseteq \Re$.
Let $n$ be the elements stored in the data structure at the latest reconstruction.
An epoch starts with $\log n$ updates. During the $i$-th update let
$N(i) \in [n, r\cdot n]$, with constant $r > 1$, denote the number of elements currently
stored into the $\frac{n}{\log n}$ buckets that partition $[a, b] \subseteq \Re$.
Then the $N(i)$ elements remain $\mu$ randomly distributed in the buckets per $i$-th update.
\end{prop}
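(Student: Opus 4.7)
The plan is to proceed by induction on the update index $i$ within the epoch, maintaining as invariant the stronger statement that the multiset of $x$-coordinates currently stored is an i.i.d.\ sample of size $N(i)$ drawn from $\mu$. Since assignment of an element to a bucket is a deterministic function of its $x$-coordinate and the (fixed) bucket boundaries chosen at reconstruction, the $\mu$-random distribution across the $\frac{n}{\log n}$ buckets then follows immediately from the i.i.d.\ invariant.

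For the base case $i=0$, the invariant holds by hypothesis: at reconstruction time the $n$ stored points have $x$-coordinates drawn i.i.d.\ from $\mu$. For the inductive step, I would split by the type of the $i$-th update. If the update is an insertion, the new point's $x$-coordinate is, by the proposition's standing hypothesis, generated by $\mu$ independently of everything already stored; appending an independent $\mu$-sample to an i.i.d.\ $\mu$-sample of size $N(i-1)$ yields an i.i.d.\ $\mu$-sample of size $N(i-1)+1=N(i)$, as required.

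The deletion case is the substantive step. Here the removed point is chosen uniformly at random among the $N(i-1)$ currently stored points, independently of their $x$-values. Writing the stored points as $X_1,\dots,X_{N(i-1)}$ (i.i.d.\ from $\mu$ by induction) and letting $J$ be the uniformly random deletion index, I would argue by exchangeability: for any fixed $j$, the joint distribution of $(X_k)_{k\neq j}$ is that of $N(i-1)-1$ i.i.d.\ samples from $\mu$, and since this distribution does not depend on $j$, the mixture over the uniform choice of $J$ is also the $N(i-1)-1$-fold product of $\mu$. Thus the surviving points form an i.i.d.\ $\mu$-sample of size $N(i)=N(i-1)-1$, closing the induction. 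Note this is exactly where Lemma~\ref{lmm:min} will be useful elsewhere (for the minimum/maximum being ``violating''), but not needed for the invariant itself.

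The main obstacle, and the place I would be most careful, is making precise the independence of the deletion choice $J$ from the stored $x$-coordinates: one must ensure that the adversary's choice of \emph{whether} to insert or delete, and indeed the bucket boundaries fixed at the last reconstruction, do not secretly condition on the $X_k$ in a way that breaks exchangeability. Since the boundaries were fixed before any of the current updates, and the uniform-random deletion rule does not inspect the $x$-coordinates, exchangeability is preserved and the inductive argument goes through cleanly. With the invariant established for every $i$ in the epoch and every $N(i)\in[n,r\cdot n]$, the conclusion of the proposition follows.
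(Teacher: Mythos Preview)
Your inductive argument is correct and is essentially the standard ``randomness preservation'' argument for i.i.d.\ samples under random deletions (this is precisely the content of Knuth~\cite{K77}, which the paper already invokes in its preliminaries). The paper itself does not give a proof of this proposition at all: it simply writes ``The proof is analogous to \cite[Lem.~2]{KMSTTZ03} and is omitted.'' So there is nothing to compare your approach against; you have supplied the details the authors chose to defer to a reference.

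One small remark on your last paragraph: the bucket boundaries in this construction \emph{are} in fact determined by the $x$-coordinates present at reconstruction (the points are sorted and split into groups of size $\Theta(\log n)$), so they are not independent of the data. This does not affect your i.i.d.\ invariant, which is a statement about the unordered multiset of stored $x$-coordinates and makes no reference to the buckets; the bucket assignment is then just a deterministic post-processing of that i.i.d.\ sample through boundaries that are themselves measurable with respect to the time-$0$ sample. Your phrasing ``fixed before any of the current updates'' captures the right point, but be careful not to suggest the boundaries are independent of the $X_k$ --- they are merely fixed (i.e., not re-sampled) during the epoch, which is all you need.
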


\begin{proof}
The proof is analogous to \cite[Lem. 2]{KMSTTZ03} and is omitted.
\end{proof}

\begin{theorem} \label{thm:insdel:logn}
For a sequence of  $O( \log n)$ updates, the expected number of violating elements is $O(1)$,
assuming that the elements are being continuously drawn from a $\mu$-random distribution.
\end{theorem}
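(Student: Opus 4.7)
The plan is to bound, for each of the $O(\log n)$ updates in the epoch, the probability that it is violating, and then invoke linearity of expectation. First I would apply Proposition~\ref{Prop:randomness-invarince} to conclude that throughout the epoch the $N(i)=\Theta(n)$ points currently stored remain $\mu$-randomly distributed among the $n/\log n$ buckets: the $x$-coordinates inside each bucket form a $\mu$-random sample on the corresponding $x$-interval, and since the $y$-coordinates are drawn independently from a continuous density, the $y$-coordinates within each bucket form an i.i.d.\ sample from a continuous distribution. This places us in the setting of Lemma~\ref{lmm:min}, applied coordinate-wise in $y$.

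Let $V_i$ be the indicator that the $i$-th update is violating, $B_j$ the $j$-th bucket with current size $|B_j|$, and $p_j$ the $\mu$-mass of the $x$-interval defining $B_j$. For an insertion the new point falls in bucket $j$ with probability $p_j$; conditional on this, Lemma~\ref{lmm:min} gives that its $y$-coordinate is the minimum of the $|B_j|+1$ values in the bucket with probability $1/(|B_j|+1)$, which is exactly the event triggering an upper-level update. Deletions are handled symmetrically: the deleted point is chosen uniformly at random from $B_j$, so it is the $y$-minimum with probability $1/|B_j|$. Combining these two cases yields
\[
\Pr[V_i=1] \;\leq\; \sum_{j=1}^{n/\log n} \frac{p_j}{|B_j|}.
\]

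Next I would argue that the right-hand side is $O(1/\log n)$. The buckets were originally built so that each contains $\Theta(\log n)$ points, hence $p_j=\Theta((\log n)/n)$ for every $j$, and by Proposition~\ref{Prop:randomness-invarince} the expected bucket size during the epoch is $N(i)\cdot p_j=\Theta(\log n)$. A standard Chernoff estimate then shows that $|B_j|=\Theta(\log n)$ simultaneously for every $j$ with probability $1-n^{-\omega(1)}$. On this event each summand is $O(p_j/\log n)$ and $\sum_j p_j=1$, giving $\Pr[V_i=1]=O(1/\log n)$; the complementary event contributes only a negligible correction since its probability is super-polynomially small while each summand is trivially at most $p_j$. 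Finally, summing $\Pr[V_i=1]$ over the $O(\log n)$ updates of the epoch by linearity of expectation yields $O(\log n)\cdot O(1/\log n) = O(1)$ expected violating elements.

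The main obstacle is the concentration step just described: one cannot simply replace $|B_j|$ by $\mathrm{E}[|B_j|]$ in the bound because $1/x$ is convex, so Jensen gives the inequality in the wrong direction. The resolution leans on the distributional invariance guaranteed by Proposition~\ref{Prop:randomness-invarince}, which keeps the $x$-coordinates $\mu$-random throughout the epoch so that a union bound over the $n/\log n$ buckets of the Chernoff tail is summable, enabling the naive per-bucket estimate $1/|B_j|=O(1/\log n)$ to be applied uniformly with high probability.
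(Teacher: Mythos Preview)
Your proposal is correct and follows essentially the same approach as the paper: invoke Proposition~\ref{Prop:randomness-invarince} for randomness invariance, apply Lemma~\ref{lmm:min} to bound the per-update violation probability by roughly $1/|B_j|$, use a large-deviation (Chernoff) bound to ensure $|B_j|=\Theta(\log n)$ with overwhelming probability, and finish by linearity of expectation over the $O(\log n)$ updates. The paper conditions on a single target bucket rather than writing the explicit sum $\sum_j p_j/|B_j|$, and obtains $p_j=\Theta((\log n)/n)$ by citing an external concentration result rather than your informal ``buckets were built with $\Theta(\log n)$ points'' argument, but these are cosmetic differences; your treatment of the Jensen obstacle via a high-probability event is exactly what the paper does with its two-case split on bucket size.
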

\begin{proof}
According to Prop. \ref{Prop:randomness-invarince}, there are $N(i) \in [n, r\cdot n]$ (with constant $r > 1$)
elements with their  $x$-coordinates $\mu$-randomly distributed in the buckets $j=1,\ldots,\frac{n}{\log n}$,
that partition $[a, b] \subseteq \Re$. By \cite[Th. 4]{KMSTTZ03}, with high probability,
each bucket $j$ receives an $x$-coordinate with probability $p_j= \Theta(\frac{\log n}{n})$.
It follows that during the $i$-th update operation, the elements in bucket $j$ is a Binomial random variable with
mean $p_j \cdot N(i)= \Theta(\log n)$.

The elements with $x$-coordinates in an arbitrary bucket $j$ are $\alpha N(i)$ with
probability
${N(i) \choose  \alpha
N(i)} p_j^{\alpha N(i)} (1-p_j)^{(1- \alpha)N(i)}
\sim
\left[
\left(\frac{p_j}{\alpha}\right)^{\alpha}
\left(\frac{1-p_j}{1-\alpha}\right)^{1-\alpha}
\right]^{N(i)}.
$
In turn, these are $\leq \alpha N(i)=\frac{p_j }{2} N(i)$ (less than half of the bucket's mean) with probability
\begin{eqnarray}
\leq \frac{p_j N(i)}{2} \cdot
\left[
\left(\frac{p_j}{\alpha}\right)^{\alpha}
\left(\frac{1-p_j}{1-\alpha}\right)^{1-\alpha}
\right]^{N(i)}\rightarrow 0
\label{Eq:pr-half-expectation}
\end{eqnarray}
as $n\rightarrow \infty$ and $\alpha = \frac{p_j }{2}$.

Suppose that an element is inserted in the $i$-th update. It induces a violation if its $y$-coordinate
is strictly the minimum element of the bucket $j$ it  falls into.
\begin{itemize}
\item
If the bucket contains $\geq \frac{p_j }{2}\log N(i) \geq \frac{p_j }{2}\log n$ coordinates then by Lemma \ref{lmm:min}
element $y$ incurs a violation with probability $O(\frac{1}{\log n})$.
\item
If the bucket contains $< \frac{p_j }{2}\log N(i)$ coordinates, which is as likely as in Eq. (\ref{Eq:pr-half-expectation}),
then element $y$ may induce $\leq 1$ violation.
\end{itemize}
Putting these cases together, element $y$ expectedly induces at most $O(\frac{1}{\log n}) +$Eq. (\ref{Eq:pr-half-expectation})$ = O(\frac{1}{\log n})$ violations.
We conclude that during the whole epoch of $\log n$ insertions the expected number of violations are at most
 $\log n \cdot O(\frac{1}{\log n})$ plus $\log n \cdot$ Eq. (\ref{Eq:pr-half-expectation}) which is $O(1)$.
\end{proof}

\begin{theorem} \label{thm:insdel:n}
For a sequence of  $O(n)$ updates, the expected number of violating elements is $O(\log n)$,
assuming that $x-$ coordinates are drawn from a continuous smooth distribution and the $y-$ coordinates
are drawn from the restricted class of distributions (\emph{power-law} or \emph{zipfian}).
\end{theorem}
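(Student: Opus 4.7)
\smallskip
\noindent\textbf{Proof Plan.} The plan is to mimic the two-part structure of Thm.~\ref{thm:insdel:logn} but over the longer epoch of $O(n)$ updates, leveraging the tighter assumptions on both coordinates. First I would upgrade Prop.~\ref{Prop:randomness-invarince} from a $\log n$-update epoch to an $n$-update epoch: the hypothesis there was only that the $x$-coordinates are drawn from an arbitrary continuous distribution, but here the stronger $(f_1,f_2)$-smoothness assumption lets me apply~\cite[Th.~4]{KMSTTZ03} so that, throughout the epoch, each of the $n/\log n$ buckets partitioning $[a,b]$ holds $\Theta(\log n)$ elements with high probability (a Chernoff concentration on a binomial with mean $\Theta(\log n)$). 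This controls the denominator in every subsequent violation probability and ensures the bucket skeleton does not degrade as the epoch progresses.

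Next I would bound the expected number of \emph{insertion-induced} violations. An inserted point triggers a violation exactly when its $y$-value is a new minimum of its bucket, so by Lem.~\ref{lmm:min} the probability of a violation per insertion into a bucket of current size $m$ is $\tfrac{1}{m+1}$, \emph{for any continuous $y$-distribution}, in particular for a continuous power-law. Aggregating across a single bucket as it grows through $\Theta(\log n)$ insertions gives the harmonic sum $H_{\Theta(\log n)}=\Theta(\log\log n)$ expected insertion-violations per bucket. A telescoping/linearity-of-expectation argument across the $n/\log n$ buckets, combined with the fact that a single insertion can only affect one bucket, lets me charge each insertion a total expected violation cost of $O(1/\log n)$, whence $O(n)$ insertions contribute $O(n/\log n)$ in the continuous power-law case; in the Zipfian case the bucket-minimum stabilizes at the support-minimum after a constant number of samples (so the harmonic term collapses further), giving an even smaller contribution.

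Third I would handle \emph{deletion-induced} violations under the uniform-random deletion assumption. For each deletion, the probability of removing the distinguished representative of a specific bucket is $1/N(i)=\Theta(1/n)$; summing over the $n/\log n$ buckets, a random deletion causes a representative change with probability $\Theta(1/\log n)$, so over $O(n)$ deletions the expected deletion-violation count is again $O(n/\log n)$. In the Zipfian case I would sharpen this using the fact that the support-minimum has constant probability mass, so the bucket's minimum value has multiplicity $\Theta(\log n)$ and only the deletion of the \emph{last} copy is a true violation, shaving a logarithmic factor.

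The main obstacle, and what I expect the authors actually exploit to reach the cleaner $O(\log n)$ bound stated, is passing from the coarse per-bucket harmonic estimate $H_{\log n}$ to a global tournament-style bound over the whole structure: viewing the sequence of representatives over time as record-minima of a Zipfian/power-law stream, the expected number of records in a length-$n$ stream from a restricted-class distribution is $O(\log n)$ rather than $\Theta(n \log\log n / \log n)$, because the heavy concentration of mass on the support-minimum kills almost all potential records after a short initial phase. Formalizing this record-count bound, and combining it additively with the deletion contribution (sharpened by the multiplicity argument above), would yield the claimed $O(\log n)$ expected violations.
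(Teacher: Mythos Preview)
Your plan circles the right idea in a few places (the ``bucket-minimum stabilizes at the support minimum'' remark, the multiplicity observation for deletions) but the main line of argument you propose does not reach $O(\log n)$, and the fix you sketch at the end is not what the paper does.

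Concretely: invoking Lem.~\ref{lmm:min} gives a per-insertion violation probability of $\Theta(1/\log n)$, hence $O(n/\log n)$ over the epoch, as you yourself note. Your proposed rescue via a global ``record-minima of a restricted-class stream'' bound is vague, and in fact the expected number of records in a length-$n$ i.i.d.\ sample from a \emph{continuous} distribution is $\Theta(\log n)$ regardless of the law, so the continuous power-law case you start from cannot be sharpened that way; you would be summing $\Theta(\log n)$ records over $n/\log n$ buckets and be back to $\Theta(n/\log n)$. Also, Lem.~\ref{lmm:min} is stated for continuous distributions and is simply not the tool the paper uses here.

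The paper's argument is shorter and hinges entirely on the \emph{discreteness} of the restricted class. Write the $y$-support as $y_1<y_2<\cdots$ and set $q=\Pr[y>y_1]$. If a bucket already contains $\Omega(\log n)$ points, the probability that \emph{none} of them equals $y_1$ is at most $q^{\Omega(\log n)}$; and if at least one equals $y_1$, no future insertion can be strictly smaller, so no violation is possible. Hence each insertion violates with probability at most $q^{\Omega(\log n)}$ (plus the negligible Eq.~(\ref{Eq:pr-half-expectation}) term for underfull buckets), and the epoch total is $n\,q^{\Omega(\log n)}+o(1)$, which is $O(\log n)$ exactly when $q\le\bigl(\tfrac{c\log n}{n}\bigr)^{1/\log n}\to e^{-1}$. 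The ``restricted class'' assumption is used only to guarantee this bound on $q$; there is no harmonic sum, no record-count lemma, and the paper's proof does not separately analyze deletions. Your multiplicity remark for Zipfian deletions is essentially the same observation seen from the other side, so you were close---just promote the support-minimum argument from an aside to the whole proof.
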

\begin{proof}
Suppose an element is inserted, with its $y$-coordinate
following a discrete distribution (while its $x$-coordinate is arbitrarily distributed) in the universe  $\{y_1, y_2, \ldots \}$ with $y_i < y_{i+1}, \forall i \geq 1$.
Also, let $q= \Pr [y > y_1]$ and $y_j^*$ the $\min$ $y$-coordinate of the elements in bucket $j$ as soon as
the current epoch starts.
Clearly, the element just inserted incurs a violation when landing into bucket $j$ with probability $\Pr[y < y_j^*]$.
\begin{itemize}
\item
If the bucket contains $\geq \frac{p_j }{2}\log N(i) \geq \frac{p_j }{2}\log n$ coordinates, then
coordinate $y$ incurs a violation with probability $\leq q^{\frac{p_j }{2}\log n}$.
(In other words, a violation may happens when at most all the $\Omega (\log n)$ coordinates of the elements in bucket
$j$ are $> y_1$, that is, when $y_j^* > y_1$.)
\item
If the bucket contains $< \frac{p_j }{2}\log N(i)$ coordinates, which is as likely as in Eq. (\ref{Eq:pr-half-expectation})
then coordinate $y$ may induces $\leq 1$ violation.
\end{itemize}
All in all, $y$ coordinate expectedly induces $\leq q^{\Omega (\log n)} +$ Eq. (\ref{Eq:pr-half-expectation}) violations.
Thus, during the whole epoch of $n$  insertions the expected number of violations are at most
$n \cdot \left(q^{\Omega (\log n)}\right) + n \cdot$  Eq. (\ref{Eq:pr-half-expectation}) $= n q^{\Omega (\log n)} + o(1)$ violations.
This is at most  $c \cdot \log n= O(\log n)$ if $q \leq \left(\frac{c\log n}{n} \right)^{(\log n)^{-1}} \rightarrow e^{-1}$ as $n \rightarrow \infty$.
\end{proof}
\begin{rem}
Note that {\em Power Law} and {\em Zipfian} distributions have the aforementioned property that $q \leq \left(\frac{c\log n}{n} \right)^{(\log n)^{-1}}
\rightarrow e^{-1}$ as $n \rightarrow \infty$.
\end{rem}

\section{The First Solution for Random Distributions}
\label{sec:str}
\label{first}
In this section, we present the construction that works under the assumptions that the $x$ and $y$-coordinates are continuously
drawn by an unknown $\mu$-random distribution.

The structure we propose consists of two levels, as well as an
auxiliary data structure. All of them are implemented as PSTs.
The lower level partitions the points into {\em buckets} of almost
equal logarithmic size according to the $x$-coordinate of the
points. That is, the points are sorted in increasing order according
to $x$-coordinate and then divided into sets of $O(\log{n})$
elements each of which constitutes a bucket. A bucket $C$ is implemented as a PST and is
represented by a point $C^{min}$ which has the smallest $y$-coordinate
among all points in it. This means that for each bucket the cost for insertion, deletion and search
is equal to $O(\log{\log{n}})$, since this is the height of the PST
representing $C$.


The upper level is a PST on the representatives of the lower
level. Thus, the number of leaves in the upper level is $O\left(
\frac{n}{\log{n}} \right)$. As a result, the upper level supports the
operations of insert, delete and search in $O(\log{n})$ time. In
addition, we keep an extra PST for insertions of violating points. Under this context, we
call a point $p$ {\em violating}, when its $y$-coordinate is less than $C^{min}$
of the bucket $C$ in which it should be inserted. In the case of a
violating point we must change the representative of $C$ and as a
result we should make an update operation on the PST of the upper
level, which costs too much, namely $O(\log{n})$.

We assume that the $x$ and $y$-coordinates are drawn from an unknown
$\mu$-random distribution and that the $\mu$ function never changes.
Under this assumption, according to the combinatorial game of bins and balls, presented in Section 5 of \cite{KMSTTZ03}, the
size of every bucket is $O(\log^c n)$, where $c>0$ is a constant, and no bucket becomes empty w.h.p. We consider epochs
of size $O(\log{n})$, with respect to update operations. During an
epoch, according to Theorem \ref{thm:insdel:logn}, the number of violating points
is expected to be $O(1)$ w.h.p. The extra PST stores
exactly those $O(1)$ violating points. When a new epoch starts, we take
all points from the extra PST and insert them in the respective
buckets in time $O(\log{\log{n}})$ expected w.h.p. Then we need to incrementally update the PST of the upper
level. This is done during the new epoch that just started. In this
way, we keep the PST of the upper level updated and the size of the
extra PST constant. As a result, the update operations are carried out
in $O(\log{\log{n}})$ time expected w.h.p., since
the update of the upper level costs $O(1)$ time w.c.

The 3-sided query can be carried out in the standard way. Assume the
query $[a,b]\times (-\infty,c]$. First we search down the PST of the
upper level for $a$ and $b$. Let $P_a$ be the search path for $a$
and $P_b$ for $b$ respectively. Let $P_m=P_a\cap P_b$. Then, we check
whether the points in the nodes on $P_a\cup P_b$ belong to the answer
by checking their $x$-coordinate as well as their
$y$-coordinate. Then, we check all right children of $P_a - P_m$ as
well as all left children of $P_b - P_m$. In this case we just check
their $y$-coordinate since we know that their $x$-coordinate
belongs in $[a,b]$. When a point belongs in the query, we also
check its two children and we do this recursively. After finishing
with the upper level we go to the respective buckets by following a
single pointer from the nodes of the upper level PST of which the
points belong in the answer. Then we traverse in the same way the
buckets and find the set of points to report. Finally, we check the
extra PST for reported points. In total the query time is
$O(\log{n}+t)$ w.c.

Note that deletions of points do not affect the correctness of the query algorithm. If a non violating point is deleted, it should
reside on the lower level and thus it would be deleted online. Otherwise, the auxiliary PST contains it and thus the deletion is
online again. No deleted violating point is incorporated into the upper level, since by the end of the epoch the PST contains only
inserted violating points.

\begin{theorem}
There exists a dynamic main memory data structure that supports
3-sided queries in $O(\log n + t)$ w.c. time, can be updated in
$O(\log \log n)$ expected w.h.p. and consumes linear space, under the
assumption that the $x$ and $y$-coordinates are continuously drawn
from a $\mu$-random distribution.
\end{theorem}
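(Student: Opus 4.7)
The plan is to verify linear space, worst-case query time $O(\log n + t)$, and update time $O(\log\log n)$ expected w.h.p.\ for the two-level PST structure described above, using Theorem~\ref{thm:insdel:logn} to amortize the expensive upper-level work. Space and query time are the easier pieces: under the $\mu$-random assumption the balls-in-bins analysis of \cite{KMSTTZ03} guarantees that every bucket contains $O(\log^c n)$ points w.h.p., so the $\Theta(n/\log n)$ bucket PSTs together fit in $O(n)$ space, the upper PST adds $O(n/\log n)$, and the auxiliary PST has expected constant size, giving $O(n)$ overall. For queries, the standard 3-sided search on the upper PST visits $O(\log n + k)$ nodes, where $k$ counts the representatives whose bucket minima lie in the query range; descending into each such bucket reports its $t_i$ hits in $O(1+t_i)$ time (output-sensitive PST traversal), the $t_i$ telescope to $t$, and a final $O(1)$-expected scan of the auxiliary PST completes the query in $O(\log n + t)$ worst case. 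Deletions do not break correctness because any deleted violating point is removed from the auxiliary PST before it could have been absorbed into the upper PST.

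The heart of the proof is the update bound. An insertion of $p$ begins by locating the destination bucket $C$ from $p$'s $x$-coordinate in $O(\log\log n)$ expected time, exploiting the $\mu$-randomness of the bucket boundaries (for instance via an interpolation-style dictionary on boundaries, in the spirit of the IS-tree of Section~\ref{DS}). If $p$'s $y$-coordinate is at least $C^{\min}$ then $p$ is non-violating and we insert it into $C$'s PST of size $O(\log^c n)$ at cost $O(\log\log n)$ worst case; otherwise $p$ is violating and goes into the auxiliary PST at $O(1)$ worst-case cost, since by Theorem~\ref{thm:insdel:logn} only $O(1)$ expected violations arise per epoch of $\Theta(\log n)$ updates, keeping its size constant in expectation. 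The crucial amortization is the deferred rebuild: each of the $O(1)$ expected violations per epoch eventually triggers an $O(\log n)$-cost upper-PST update, which is spread incrementally over the $\Theta(\log n)$ operations of the following epoch, charging $O(1)$ worst case per operation. Summing these three contributions yields $O(\log\log n)$ expected w.h.p.\ per update; deletions are handled symmetrically, with the uniform-random deletion assumption ensuring Theorem~\ref{thm:insdel:logn} continues to apply.

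The main obstacle I anticipate is verifying that the slow, incremental rebuild of the upper PST never compromises query correctness mid-epoch. For this I would maintain the invariant that at every moment the union of the upper PST, the bucket PSTs, and the auxiliary PST represents exactly the current point set with correct representatives: freshly inserted violating points sit in the auxiliary PST until absorbed by the upper PST at the next epoch's end, and rare violating deletions are scrubbed from the auxiliary PST first so they cannot propagate. Once this invariant is established, correctness and the stated complexity bounds both follow.
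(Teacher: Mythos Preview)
Your proposal is essentially the paper's own argument: the same two-level PST with an auxiliary PST for violating points, the same epochs of $\Theta(\log n)$ updates, the same appeal to Theorem~\ref{thm:insdel:logn} for an $O(1)$ expected number of violations per epoch, and the same incremental absorption of those violations into the upper level so that each update is charged $O(1)$ for the upper-level work plus $O(\log\log n)$ for the bucket PST.

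The one place you go beyond the paper is bucket location. The paper is silent on how, given $p$'s $x$-coordinate, one finds the correct bucket in $O(\log\log n)$ time; you fill the gap with an interpolation-style dictionary ``in the spirit of the IS-tree''. Be careful here: the IS-tree of Section~\ref{DS} achieves $O(\log\log n)$ expected search only under an $(n^{\alpha},n^{\beta})$-\emph{smooth} distribution, which is strictly stronger than the $\mu$-random hypothesis of this theorem. Under an arbitrary continuous density~$\mu$, interpolation search need not be doubly logarithmic, so either you must argue separately that the bucket boundaries (which are themselves $\mu$-random order statistics) admit such a search, or accept that this step is as underspecified in your write-up as it is in the paper. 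A minor related point: querying the auxiliary PST is not ``$O(1)$ expected'' but $O(\log\log n + t')$ worst case even if all $\Theta(\log n)$ updates of the epoch were violating, which is what actually makes the overall query bound worst-case rather than expected.
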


If we implement the above solution by using EPSTs \cite{ASV99},
instead of PSTs, then the solution becomes I/O-efficient, however the update cost is amortized instead of worst case. Thus we get
that:

\begin{theorem}
There exists a dynamic external memory data structure that supports
3-sided queries in $O(\log_B n + t/B)$ w.c. time, can be updated in
$O(\log_B \log n)$ amortized expected w.h.p. and consumes linear space, under the
assumption that the $x$ and $y$-coordinates are continuously drawn
from a $\mu$-random distribution.
\end{theorem}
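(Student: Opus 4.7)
The plan is to port the two-level RAM construction of the previous theorem to external memory by replacing every Priority Search Tree with the External Priority Search Tree (EPST) of Arge et al.~\cite{ASV99}. Concretely, the points are partitioned by $x$-coordinate into $\Theta(n/\log n)$ buckets of $\Theta(\log n)$ points; each bucket is stored in its own EPST and is represented on the upper level by its minimum-$y$ point; the $\Theta(n/\log n)$ representatives are indexed by a single upper-level EPST; and an auxiliary EPST accumulates violating points within the current epoch. Because each EPST occupies linear space, the total space is $O(n/B)$.

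For the query, I would run the RAM algorithm verbatim with every PST primitive replaced by its EPST counterpart. The 3-sided query on the upper EPST returns the representatives lying in $[a,b]\times(-\infty,c]$ in $O(\log_B n + t'/B)$ I/Os. For every such representative I descend via a stored pointer to its bucket's EPST; on the at most two boundary buckets I execute a full 3-sided query at cost $O(\log_B \log n + t_i/B)$, while on each interior bucket I execute the cheaper 1-sided $(y\le c)$ heap-like traversal of the EPST in $O(t_i/B)$ I/Os once the bucket's root block has been loaded. Adding the auxiliary EPST query yields the claimed $O(\log_B n + t/B)$ worst-case query bound, reusing the same ``absorb $t'$ into $t$'' accounting as the RAM analysis.

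For the update, a non-violating operation costs $O(\log_B \log n)$ amortized I/Os on the bucket EPST; a violating insertion is additionally appended to the auxiliary EPST in $O(1)$ I/Os, since the auxiliary has expected size $O(1)$ by Theorem~\ref{thm:insdel:logn}. At the end of every epoch of $\log n$ updates I move the $O(1)$ expected violating points back into their home buckets ($O(\log_B \log n)$ amortized I/Os each) and propagate the $O(1)$ induced representative changes to the upper EPST at $O(\log_B n)$ amortized I/Os per change. Spreading this end-of-epoch work over the next $\log n$ updates contributes only $O(\log_B n / \log n) = O(1)$ amortized I/Os per update; standard global rebuilding to maintain $\Theta(\log n)$-sized buckets as $n$ changes by a constant factor adds another $O(1)$ amortized I/Os per update. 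The dominant term is therefore the bucket update, yielding the $O(\log_B \log n)$ amortized expected w.h.p.\ update cost.

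The main obstacle is the shift from worst-case to amortized cost in the external setting: the $O(\log_B n)$ upper-level EPST update is amortized rather than worst-case, and it is triggered a number of times that is itself a random variable controlled in expectation by Theorem~\ref{thm:insdel:logn}. Compounding amortized EPST updates with an expected bound on violations forces the final update guarantee to be stated as ``amortized expected w.h.p.''\ rather than the ``expected w.h.p.''\ of the RAM theorem. A secondary technicality is that a naive bucket-level reporting would incur an additive $+1$ I/O per visited interior bucket; I resolve this, as in the RAM analysis, by noting that each visited interior bucket contributes at least its representative to the output and by laying out the bucket roots adjacent to their upper-level pointer block, so that the per-bucket overhead is absorbed into the $t/B$ reporting term of the upper-level EPST descent.
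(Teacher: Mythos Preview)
Your proposal is correct and follows exactly the paper's approach: the paper's own proof of this theorem is a single sentence stating that one replaces every PST in the RAM construction by the EPST of~\cite{ASV99}, noting that the EPST's amortized update bound forces the final update guarantee to be amortized expected w.h.p.\ rather than expected w.h.p. You have in fact supplied considerably more detail than the paper does---in particular your discussion of the per-bucket $+1$ I/O overhead and the layout trick to absorb it into $t/B$ addresses a subtlety the paper leaves implicit.
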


\section{The Second Solution for the Smooth and Random Distributions}
\label{second}

We will present the invented data structures in RAM and I/O model respectively.

\subsection{The Second Solution in RAM model}

Our internal memory construction for storing $n$ points in the plane consists of an IS-tree storing the points in sorted order with respect to the $x$-coordinates. On the sorted points, we maintain a weight balanced exponential search tree $T$ with $c_2 = 3/2$ and $c_1 = 2^6$. Thus its height is $\Theta (\log \log n)$. In order to use $T$ as a priority search tree, we augment it as follows. The root stores the point with overall minimum $y$-coordinate.
Points are assigned to nodes in a top-down manner, such that a node $u$ stores the point with minimum $y$-coordinate among the points in $T_u$ that is not already stored at an ancestor of $u$. Note that the point from a leaf of $T$ can only be stored at an ancestor of the leaf and that the $y$-coordinates of the points stored at a leaf-to-root path are monotonically decreasing (\emph{Min-Heap Property}). Finally, every node contains an RMQ-structure on the $y$-coordinates of the points in the children nodes and an array with pointers to the children nodes. Every point in a leaf can occur at most once in an internal node $u$ and the RMQ-structure of $u$'s parent. Since the space of the IS-tree is linear~\cite{MT93,KMSTTZ06}, so is the total space.


\subsubsection{Querying the Data Structure:}

\label{queryds}
Before we describe the query algorithm of the data structure, we will describe the query algorithm that finds all points with $y$-coordinate less than $c$ in a subtree~$T_u$. Let the query begin at an internal node $u$. At first we check if the $y$-coordinate of the point stored at $u$ is smaller or equal to $c$ (we call it a {\em member} of the query). If not we stop. Else, we identify the $t_u$ children of $u$ storing points with $y$-coordinate less than or equal to $c$, using the RMQ-structure of $u$. That is, we first query the whole array and then recurse on the two parts of the array partitioned by the index of the returned point. The recursion ends when the point found has $y$-coordinate larger than $c$ ({\em non-member} point).
%

\begin{lemma}
\label{query}
For an internal node $u$ and value $c$, all points stored in $T_u$ with $y$-coordinate $\leq$$c$ can be found in $O(t + 1)$ time, when $t$ points are reported.
\end{lemma}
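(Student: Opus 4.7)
The plan is to establish correctness from the Min-Heap Property and then bound the running time by a charging argument on the RMQ-induced recursion tree at each visited node.

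For correctness, the key observation is that the Min-Heap Property forces the set of \emph{member} nodes inside $T_u$ (those whose stored point has $y$-coordinate $\leq c$) to form a connected subtree rooted at $u$, provided $u$ itself is a member; otherwise this set is empty. Indeed, if $v \in T_u$ is a member and $w$ is an ancestor of $v$ within $T_u$, then $y(w) \leq y(v) \leq c$ by the heap order, so $w$ is a member as well. Hence the top-down recursion, which only descends into children certified by the local RMQ scan as members, visits exactly the members of $T_u$. The RMQ scan at a node $v$ is itself correct: an RMQ over a subrange of children returns the index achieving the minimum $y$-value there, so if that minimum exceeds $c$ every child in the subrange has $y > c$ and the subrecursion may be safely pruned.

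For the running time, fix a visited member $v$ and let $k_v$ be the number of its member children. The RMQ-driven scan of $v$'s children produces a binary recursion tree whose internal nodes are exactly the $k_v$ member children (each such find splits the current range into two) and whose leaves are either RMQ calls that return a non-member or empty subranges. A binary tree with $k_v$ internal nodes has at most $k_v + 1$ leaves, so the scan performs $O(k_v + 1)$ RMQ calls, each costing $O(1)$ by the RMQ structure described in Section~\ref{RMQ}. Summing over every visited member,
\[
\sum_{v \text{ member}} O(k_v + 1) \;=\; O\!\left(\sum_{v}\! k_v\right) + O(t) \;=\; O(t),
\]
since $\sum_v k_v$ counts each member other than $u$ exactly once (as the member child of its unique member parent), and the number of visited members equals the number of reported points $t$. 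If $u$ itself is a non-member the algorithm stops after the initial $O(1)$ check, so in every case the total running time is $O(t+1)$.

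The only step requiring slight care is the per-node accounting, namely verifying that the ``dead-end'' RMQ calls returning non-members are absorbed by the $+1$ slack coming from the leaf-versus-internal bound of a binary tree; this is a standard binary-tree counting argument and is the one place where the proof is not entirely immediate.
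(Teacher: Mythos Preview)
Your proof is correct and follows essentially the same approach as the paper. The paper's argument is terser---it omits the correctness discussion entirely and directly states that at a node $v$ with $t_v$ member children the RMQ recursion returns at most $t_v+1$ non-members, then sums $\sum_v(2t_v+1)=O(t+1)$---but your charging argument via the binary recursion tree at each visited member is exactly the same mechanism, just spelled out in more detail.
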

\begin{proof}
Querying the RMQ-structure at a node $v$ that contains $t_v$ member points will return at most $t_v + 1$ non-member points. We only query the RMQ-structure of a node $v$ if we have already reported its point as a member point. Summing over all visited nodes we get a total cost of $O\left(\sum_v (2t_v + 1) \right)$$=$$O(t + 1)$.
\qed
\end{proof}

In order to query the whole structure, we first process a 3-sided query $[a,b]\times(- \infty, c]$ by searching for $a$ and $b$ in the IS-tree. The two accessed leaves $a,b$ of the IS-tree comprise leaves of $T$ as well. We traverse $T$ from $a$ and $b$ to the root. Let $P_a$ (resp. $P_b$) be the root-to-leaf path for $a$ (resp. $b$) in $T$ and let $P_m=P_a\cap P_b$. During the traversal we also record the index of the traversed child. When we traverse a node $u$ on the path $P_a - P_m$ (resp. $P_b - P_m$), the recorded index comprises the leftmost (resp. rightmost) margin of a query to the RMQ-structure of $u$. Thus all accessed children by the RMQ-query will be completely contained in the query's $x$-range $[a,b]$. Moreover, by Lem.~\ref{query} the RMQ-structure returns all member points in~$T_u$.

For the lowest node in $P_m$, i.e. the lowest common ancestor (LCA) of $a$ and $b$, we query the RMQ-structure for all subtrees contained completely within $a$ and~$b$. We don't execute RMQ-queries on the rest of the nodes of $P_m$, since they root subtrees that overlap the query's $x$-range. Instead, we merely check if the $x$- and $y$-coordinates of their stored point lies within the query. Since the paths $P_m$, $P_a - P_m$ and $P_b - P_m$ have length $O(\log \log n)$, the query time of $T$ becomes $O(\log \log n + t)$. When the $x$-coordinates are smoothly distributed, the query to the IS-Tree takes $O(\log \log n)$ expected time with high probability~\cite{MT93}. Hence the total query time is $O(\log \log n + t)$ expected with high probability.
\subsubsection{Inserting and Deleting Points:}

\label{insdel}
Before we describe the update algorithm of the data structure, we will first prove some properties of updating the points in $T$. Suppose that we decrease the $y$-value of a point $p_u$ at node $u$ to the value $y'$. Let $v$ be the ancestor node of $u$ highest in the tree with $y$-coordinate bigger than $y'$. We remove $p_u$ from $u$. This creates an ``empty slot'' that has to be filled by the point of $u$'s child with smallest $y$-coordinate. The same procedure has to be applied to the affected child, thus causing a \emph{``bubble down''} of the empty slot until a node is reached with no points at its children. Next we replace $v$'s point $p_v$ with $p_u$ (\emph{swap}). We find the child of $v$ that contains the leaf corresponding to $p_v$ and swap its point with $p_v$. The procedure recurses on this child until an empty slot is found to place the last swapped out point (\emph{``swap down''}). In case of increasing the $y$-value of a node the update to $T$ is the same, except that $p_u$ is now inserted at a node along the path from $u$ to the leaf corresponding to $p_u$.

For every swap we will have to rebuild the RMQ-structures of the parents of the involved nodes, since the RMQ-structures are static data structures. This has a linear cost to the size of the RMQ-structure (Sect.~\ref{prel}).
%
%

\begin{lemma}
\label{dom}
Let $i$ be the highest level where the point has been affected by an update. Rebuilding the RMQ-structures due to the update takes $O(w_i^{c_2 -1})$ time.
\end{lemma}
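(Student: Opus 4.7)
The plan is to pin down exactly which RMQ-structures are affected by an update, bound the rebuild cost per level by the size of the corresponding RMQ-structure, and then sum over levels using the double-exponential growth of the weight parameter.

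First, I would identify the set of RMQ-structures that must be recomputed. By the description of \emph{swap-down} and \emph{bubble-down} in Section~\ref{insdel}, an update affects the points stored along a single root-to-leaf path: the point at the highest affected node $u_i$ (level $i$) changes, and then exactly one node is touched at every level $j=i-1,i-2,\ldots,0$ along a path in $T_{u_i}$. The RMQ-structure at a node $v$ is built on the $y$-coordinates of $v$'s children, so it must be rebuilt whenever one of these $y$-coordinates changes. Hence the RMQ-structures to be rebuilt are precisely those stored at the parents of the affected nodes, i.e.\ one RMQ-structure per level from $1$ up to $i+1$.

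Next I would bound the size of each such RMQ-structure. The degree of a node at level $k$ equals the number of its children at level $k-1$, which by the weight constraint of Section~\ref{wbest} is $\Theta(w_k/w_{k-1})$. Using $w_k = w_{k-1}^{c_2}$, this ratio is $\Theta(w_{k-1}^{c_2-1})$. Since the RMQ-structure of Section~\ref{RMQ} can be built in time linear in its size, the RMQ-structure at the parent (level $j+1$) of an affected node at level $j$ can be rebuilt in $\Theta(w_j^{c_2-1})$ time.

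Finally I would sum the cost across all affected levels, obtaining
\[
\sum_{j=0}^{i} \Theta\!\left(w_j^{c_2-1}\right) .
\]
Here $w_j^{c_2-1} = c_1^{c_2^{j}(c_2-1)}$, so consecutive terms satisfy
\[
\frac{w_{j+1}^{c_2-1}}{w_j^{c_2-1}} \;=\; c_1^{c_2^{j}(c_2-1)^2} \;\geq\; 2,
\]
because with $c_1=2^6$ and $c_2=3/2$ we have $(c_2-1)^2\log_2 c_1 = 6/4 \geq 1$ already at $j=0$, and the exponent only grows with $j$. The sum is therefore a geometric (in fact doubly-exponential) series dominated by its last term, giving a total of $O(w_i^{c_2-1})$, as claimed.

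The only delicate point is the first step: one must be careful that the ``swap-down'' and ``bubble-down'' procedures really do affect at most one node per level. This is where I expect the argument to need the most care, since one must check that the heap-restoration process, described informally in Section~\ref{insdel}, never forks into two siblings at any level; once this is established, the level-by-level cost bound and the geometric-sum telescoping are routine.
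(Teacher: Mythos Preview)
Your approach is essentially the paper's: bound the rebuild cost of the RMQ-structure at each level by the node's degree $w_j/w_{j-1}=w_{j-1}^{c_2-1}$, and observe that the sum $\sum_{j=0}^{i} w_j^{c_2-1}$ is dominated by its last term. Your explicit verification that consecutive terms differ by a factor of at least~$2$ for the chosen constants is more than the paper bothers with, but harmless.

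There is one inaccuracy in your first step. You assert that an update affects nodes along a \emph{single} leaf-to-root path, and your closing caveat worries about whether the process ``forks'' at some level. That is not the issue. Each of bubble-down and swap-down is indeed a single downward path (no forking), but an update in general triggers \emph{both}: removing $p_u$ from $u$ bubbles an empty slot downward from $u$, and inserting $p_u$ at its ancestor $v$ swaps $p_v$ downward from $v$ toward the leaf of $p_v$, which need not lie in $T_u$. So there are up to \emph{two} $v$-to-leaf paths (plus the parent of $v$) whose RMQ-structures must be rebuilt. This is exactly what the paper states, and it only costs you a factor of~$2$; the bound $O(w_i^{c_2-1})$ is unchanged.
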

\begin{proof}
The executed ``bubble down'' and ``swap down'', along with the search for~$v$, traverse at most two paths in $T$. We have to rebuild all the RMQ-structures that lie on the two $v$-to-leaf paths, as well as that of the parent of the top-most node of the two paths. The RMQ-structure of a node at level $j$ is proportional to its degree, namely $O\left( w_j /w_{j-1} \right)$. Thus, the total time becomes $O$$\left(\smash\sum_{j=1}^{i+1}  w_j/w_{j-1} \right)= O$$\left(\smash \sum_{j=0}^{i} w_j^{c_2 -1 } \right) = O$$\left( w_i^{c_2 -1}\right)$.
\qed
\end{proof}

To insert a point $p$, we first insert it in the IS-tree. This creates a new leaf in $T$, which might cause several of its ancestors to overflow. We split them as described in Sec.~\ref{prel}. For every split a new node is created that contains no point. This empty slot is filled by ``bubbling down'' as described above. Next, we search on the path to the root for the node that $p$ should reside according to the Min-Heap Property and execute a ``swap down'', as described above. Finally, all affected RMQ-structures are rebuilt.

To delete point $p$, we first locate it in the IS-tree, which points out the corresponding leaf in $T$. By traversing the leaf-to-root path in $T$, we find the node in $T$ that stores $p$. We delete the point from the node and ``bubble down'' the empty slot, as described above. Finally, we delete the leaf from $T$ and rebalance~$T$ if required. Merging two nodes requires one point to be ``swapped down'' through the tree. In case of a share, we additionally ``bubble down'' the new empty slot. Finally we rebuild all affected RMQ-structures and update the IS-tree.

\noindent
\textbf{Analysis:} We assume that the point to be deleted is selected uniformly at random among the points stored in the data structure. Moreover, we assume that the inserted points have their $x$-coordinates drawn independently at random from an $(n^\alpha, n^{1/2})$-smooth distribution for a constant $1/2$$<$$\alpha $$< $$1$, and that the $y$-coordinates are drawn from an arbitrary distribution. Searching and updating the IS-tree needs $O(\log \log n)$ expected with high probability~\cite{MT93,KMSTTZ06},  under the same assumption for the $x$-coordinates.

%

\begin{lemma}
\label{updint}
Starting with an empty weight balanced exponential tree, the amortized time of rebalancing it due to insertions or deletions is $O(1)$.
\end{lemma}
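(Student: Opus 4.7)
The plan is to apply the standard amortized analysis for weight-balanced trees~\cite{AV96}: charge each rebalancing operation (split, merge, or share) performed on a node $u$ at level $i$ to the $\Omega(w_i)$ insertions and deletions that, by Lem.~\ref{rebal}, must have entered $T_u$ since $u$ was last rebalanced or created. Let $R_i$ denote the cost of a single rebalance at level $i$; then the amortized charge per update inside $T_u$ is $O(R_i/w_i)$. Because a leaf update lies in exactly one subtree per level, the total amortized rebalancing cost per update is $\sum_{i\ge 1} O(R_i/w_i)$, and since the tree starts empty no initial potential has to be set up.

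Next I would bound $R_i$. A split at level $i$ scans and redistributes the $\Theta(w_i/w_{i-1})=\Theta(w_{i-1}^{c_2-1})$ children of $u$, rebuilds the RMQ-structures of $u$ and its new sibling (each of size proportional to the degree), updates the parent's child-array, and rebuilds the parent's RMQ-structure whose size equals the parent's degree $\Theta(w_{i+1}/w_i)=\Theta(w_i^{c_2-1})$; the single empty slot the structural change creates is then ``bubbled down'' along one root-to-leaf path of length $O(\log\log n)$. Merges and shares are analogous. Hence $R_i=O(w_i^{c_2-1}+\log\log n)$.

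Plugging in $c_2=3/2$ gives $R_i/w_i=O(w_i^{-1/2})$, and since $w_i=c_1^{c_2^{\,i}}$ grows doubly exponentially in $i$ with $c_1\ge 2^6$, the series $\sum_{i\ge 1} w_i^{-1/2}$ is bounded by a constant dominated by its first term. Therefore the amortized rebalancing cost per update is $O(1)$.

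The step I expect to be most delicate is cleanly justifying $R_i=O(w_i^{c_2-1}+\log\log n)$: one has to check that the ``bubble down'' and ``swap down'' propagations triggered by a structural change descend along a single path of length $O(\log\log n)$ rather than branching (so their cost is additive, not multiplicative, across levels), and that the rebuild of the parent's RMQ-structure is charged only to the child's rebalance, because when the parent itself eventually overflows its own rebuild is paid by the separate $\Omega(w_{i+1})$-credit of Lem.~\ref{rebal}, keeping the level-by-level bookkeeping disjoint.
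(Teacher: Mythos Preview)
Your amortization scheme is the same as the paper's: by Lem.~\ref{rebal} a sequence of $n$ updates causes $O(n/w_i)$ rebalancings at level~$i$, each of some cost $R_i$, and one sums $\sum_i R_i/w_i$. The paper obtains $R_i=O(w_i^{c_2-1})$ directly by invoking Lem.~\ref{dom} and then sums $\sum_i w_i^{c_2-2}=O(1)$ for $c_2<2$.

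Your bound $R_i=O(w_i^{c_2-1}+\log\log n)$ mis-accounts the ``bubble down''. You treat it as a bare traversal of a path of length $O(\log\log n)$, costing $O(1)$ per level. But every step of a bubble down (or swap down) changes the point stored at some child, and since the RMQ-structures are \emph{static}, this forces the parent's RMQ-array to be rebuilt, at cost proportional to that parent's degree. Hence the bubble down emanating from a rebalance at level~$i$ costs $\sum_{j=1}^{i} O(w_j/w_{j-1})=O(w_{i-1}^{c_2-1})$, not $O(\log\log n)$. This is exactly the computation in Lem.~\ref{dom}, and the sum is dominated by the parent-level rebuild $O(w_i^{c_2-1})$ you already have; so the correct bound is simply $R_i=O(w_i^{c_2-1})$, with no additive $\log\log n$.

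The distinction is not cosmetic: the stray $\log\log n$ term is not absorbed when you jump to $R_i/w_i=O(w_i^{-1/2})$. At the bottom levels $w_i$ is a fixed constant (e.g.\ $w_1=c_1^{c_2}$), so $\sum_{i\ge 1}(\log\log n)/w_i=\Theta((\log\log n)/w_1)=\Theta(\log\log n)$, which would destroy the $O(1)$ conclusion. Once you charge the RMQ rebuilds along the bubble-down path correctly, the extra term disappears and your geometric sum $\sum_i w_i^{-1/2}=O(1)$ finishes the proof exactly as in the paper.
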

\begin{proof}
A sequence of $n$ updates requires at most $O$$\left( n/w_i\right)$ rebalancings at level $i$ (Lem.~\ref{rebal}). Rebuilding the RMQ-structures after each rebalancing costs $O$$\left(w_i^{c_2 -1}\right)$ time (Lem.~\ref{dom}). Summing over all levels, the total time becomes~$O$$( \sum_{i=1}^{height(T)} \frac{n}{w_i} \cdot w_i^{c_2 -1}) = O$$( n\sum_{i=1}^{height(T)} w_{i}^{c_2 -2}) $$= O(n)$, when $c_2\smash <\smash 2$.
\qed
\end{proof}

\begin{lemma}
\label{insdelint}
The expected amortized time for inserting or deleting a point in a weight balanced exponential tree is $O(1)$.
\end{lemma}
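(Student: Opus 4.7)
My plan is to split the amortized cost of an insert or delete into two contributions: (i)~the tree-rebalancing cost (splits, merges and shares, together with the RMQ rebuilds that they trigger), which Lemma~\ref{updint} already pays for with amortized $O(1)$ per update; and (ii)~the swap-down or bubble-down chain used to restore the min-heap property on the stored points, together with its RMQ rebuilds. Lemma~\ref{dom} bounds (ii) by $O(w_{i^*}^{c_2-1})$, where $i^*$ is the highest level of $T$ touched by the chain. Hence the whole proof reduces to showing that $E\!\left[w_{i^*}^{c_2-1}\right]=O(1)$ for each of the two update types.

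For a deletion the uniform-random choice of the removed point lets me compute this expectation directly. At each level $i\ge 1$ there are $\Theta(n/w_i)$ internal storage slots (one stored point per node), while level $0$ holds the remaining $\Theta(n)$ points. Thus a uniformly chosen stored point lies at level $i\ge 1$ with probability $\Theta(1/w_i)$ and at level $0$ with probability $\Theta(1)$, giving
\[
E\!\left[w_{i^*}^{c_2-1}\right] \;\le\; O(1) + \sum_{i\ge 1} \Theta(1/w_i)\cdot O\!\left(w_i^{c_2-1}\right) \;=\; O(1) + \sum_{i\ge 1} O\!\left(w_i^{c_2-2}\right).
\]
Because $c_2=3/2<2$ and $w_{i+1}=w_i^{c_2}$, the tail is a doubly-exponential geometric series that collapses to $O(1)$.

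For an insertion, $i^*$ is the level of the highest ancestor $v$ on the new leaf's root-to-leaf path whose stored $y$-coordinate exceeds $y_{\text{new}}$. The critical observation I plan to use is that, by the construction of the augmented exponential tree, the point stored at a level-$i$ node on this path is the minimum $y$-value among the $\Theta(w_i)$ leaves of its subtree, up to an $O(1)$ correction for the few points already lifted to ancestors on the path. Since the inserted $y$-values are i.i.d.\ draws from a fixed distribution and deletions are uniform, these $y$-values are an i.i.d.\ sample; Lemma~\ref{lmm:min} therefore yields $\Pr[i^*\ge i]=O(1/w_i)$, and the identical geometric summation as above gives $E\!\left[w_{i^*}^{c_2-1}\right]=O(1)$. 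Adding (i) and (ii) completes the $O(1)$ expected amortized bound.

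The step I expect to be the main obstacle is the insertion side: one must verify carefully that the $y$-values filling each level-$i$ subtree really behave as an i.i.d.\ sample throughout the entire update sequence, so that Lemma~\ref{lmm:min} applies. The smoothness of the $x$-distribution is needed here to guarantee that each such subtree actually contains $\Theta(w_i)$ of these samples (a fact inherited from the analyses of the IS-tree and of the weight-balanced exponential tree), and the uniform-random deletion rule is needed so that the residual sample after deletions remains i.i.d. Without the i.i.d.\ model on the $y$-values an adversary could insert a monotonically decreasing sequence of $y$'s forcing each new point up to the root and producing cost polynomial in $n$ per insertion, so this probabilistic input assumption is essential.
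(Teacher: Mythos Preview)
Your decomposition into (i) rebalancing and (ii) heap-maintenance, and your treatment of deletions, match the paper almost verbatim: the paper also uses $\Pr[\text{deleted point at level }i]=O(1/w_i)$ and sums $\sum_i w_i^{c_2-2}=O(1)$.

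For insertions the paper follows the same outline but is more careful at exactly the point you flagged as the obstacle, and your hand-wave there is not quite right. You say the point stored at the level-$i$ ancestor $u$ is the minimum of its $\Theta(w_i)$ leaves ``up to an $O(1)$ correction for the few points already lifted to ancestors on the path''; but $u$ has $\Theta(\log\log n)$ ancestors, not $O(1)$, so this is not immediate. The paper fixes this by introducing a free integer parameter $k$ and bounding $\Pr[i^*\ge i]$ by the sum of (a)~the probability that the new $y$ is among the $k$ smallest of the $\ge \tfrac12 w_i$ leaf-$y$'s in $T_u$, namely $k/(\tfrac12 w_i)$, and (b)~the probability that \emph{any} point of $T_u$ has been lifted to the $k$-th ancestor or higher, namely $\sum_{j\ge i+k} 2w_{j-1}/(\tfrac12 w_j)=O(w_{i+k-1}^{1-c_2})$. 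With $c_2=3/2$ and $k=3$ both terms are $O(1/w_i)$, yielding your claimed bound $\Pr[i^*\ge i]=O(1/w_i)$ and then the same geometric sum.

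One further remark: the paper does not appeal to smoothness of the $x$-distribution nor to Lemma~\ref{lmm:min} here. The $\Theta(w_i)$ subtree sizes come directly from the weight-balance invariant, not from any distributional assumption on $x$. And the ``i.i.d.\ preservation'' worry is dispatched by a single structural observation: the \emph{shape} of $T$ is determined solely by the sequence of updates and the $x$-coordinates, hence is independent of the $y$-coordinates, so conditioned on the shape the leaf-$y$'s in $T_u$ together with the new $y$ are exchangeable, which is all the rank argument needs.
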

\begin{proof}
The insertion of a point creates a new leaf and thus $T$ may rebalance, which by Lemma~\ref{updint} costs $O(1)$ amortized time.
Note that the shape of $T$ only depends on the sequence of updates and the $x$-coordinates of the points that have been inserted. The shape of $T$ is independent of the $y$-coordinates, but the assignment of points to the nodes of $T$ follows uniquely from the $y$-coordinates, assuming all $y$-coordinates are distinct.
Let $u$ be the ancestor at level $i$ of the leaf for the new point $p$. For any integer $k\geq 1$, the probability of $p$ being inserted at~$u$ or an ancestor of $u$ can be bounded by the probability that a point from a leaf of~$T_u$ is stored at the root down to the $k$-th ancestor of $u$ plus the probability that the $y$-coordinate of $p$ is among the $k$ smallest $y$-coordinates of the leaves of $T$.
The first probability is bounded by $\sum_{j=i+k}^{height(T)} \frac {2w_{j-1}}{\frac{1}{2} w_j}$, whereas the second probability is bounded by $k\big/\frac{1}{2}w_i$. It follows that $p$ ends up at the $i$-th ancestor or higher with probability at most $O$$\Big( \sum_{j=i+k}^{height(T)} \frac {2w_{j-1}}{\frac{1}{2} w_j} + \frac{k}{\frac{1}{2} w_i} \Big) =
O$$\left( \sum_{j=i+k}^{height(T)} w_{j-1}^{1-c_2} + \frac{k}{w_i} \right) =
O$$\left( w_{i+k-1}^{1-c_2} + \frac{k}{w_i} \right) =
O$$\Big( w_{i}^{ (1-c_2) \smash c_2^{k-1}} + \frac{k}{w_i} \Big) =
O$$\left( \frac{1}{w_i} \right)$ for $c_2 = 3/2$ and $k=3$. Thus the expected cost of ``swapping down'' $p$ becomes $O$$\left( \sum_{i=1}^{height(T)} \frac{1}{w_i}\cdot \frac{w_{i+1}}{w_i} \right) =
O$$\left(\sum_{i=1}^{height(T)} w_i^{c_2 -2}\right) =
O$$\left(\sum_{i=1}^{height(T)} c_1^{ (c_2 -2) c_2^i}\right) = O(1)$ for $c_2<2$.

A deletion results in ``bubbling down'' an empty slot, whose cost depends on the level of the node that contains it. Since the point to be deleted is selected uniformly at random and there are $O\left(n/w_i\right)$ points at level $i$, the probability that the deleted point is at level $i$ is $O\left( 1/w_i\right)$. Since the cost of an update at level $i$ is $O\left( w_{i+1}/w_i\right)$, we get that the expected ``bubble down'' cost is $O\left( \sum_{i=1}^{height(T)} \frac{1}{w_i} \cdot \frac{w_{i+1}}{w_{i}}\right) = O(1)$ for $c_2<2$.
\qed
\end{proof}
\begin{theorem}
In the RAM model, using $O(n)$ space, 3-sided queries can be supported in $O(\log \log n + t/B)$ expected time with high probability, and updates in $O(\log \log n)$ time expected amortized, given that the $x$-coordinates of the inserted points are drawn from an $(n^\alpha,n^{1/2})$-smooth distribution for constant $1/2$$<$$\alpha$$<$$1$, the $y$-coordinates from an arbitrary distribution, and that the deleted points are drawn uniformly at random among the stored points.
\end{theorem}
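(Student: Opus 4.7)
The plan is to assemble the already-established component bounds. For space, the IS-tree occupies $O(n)$~\cite{MT93,KMSTTZ06}, the weight balanced exponential tree $T$ has $O(n)$ nodes (one leaf per point, constant branching amortized by the weight invariant), and the RMQ-structure at each internal node has size proportional to its degree; summed over all nodes this telescopes to $O(n)$, since $\sum_{u} \deg(u) = |\text{nodes}(T)| - 1$. Together these give the claimed $O(n)$ total space.

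For the query time, I would first invoke the IS-tree to locate $a$ and $b$ in $O(\log\log n)$ expected time with high probability, which holds under the $(n^{\alpha},n^{1/2})$-smoothness hypothesis on the $x$-coordinates~\cite{MT93}. These two leaves coincide with leaves of $T$, so the remaining work is the traversal of the two root-to-leaf paths $P_a$ and $P_b$ and the RMQ queries issued along them, exactly as described in Section~\ref{queryds}. Each of $P_a$, $P_b$, $P_m$ has length $\Theta(\log\log n)$ since $T$ has doubly-logarithmic height, contributing $O(\log\log n)$ to the cost before any RMQ call. Lemma~\ref{query} then bounds the cost of the RMQ descents on the nodes of $P_a\setminus P_m$ and $P_b\setminus P_m$ and on the children of the LCA by $O(t+1)$, since every child whose range is fully contained in $[a,b]$ is reported together with at most a constant number of non-members per reported member; the non-contained nodes along $P_m$ are each tested in $O(1)$ against the query box. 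Summing yields the stated $O(\log\log n + t)$ expected-with-high-probability query time.

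For the update time, an insertion first updates the IS-tree in $O(\log\log n)$ expected w.h.p.; this creates a new leaf in $T$, whose ancestors may overflow. Lemma~\ref{updint} charges all rebalancings (splits, merges, shares) and the consequent rebuilding of the affected RMQ-structures to $O(1)$ amortized, using the weight invariant together with the fact that rebuilding an RMQ at level $i$ costs $O(w_i^{c_2-1})$ by Lemma~\ref{dom}. The insertion is completed by searching the leaf-to-root path of length $O(\log\log n)$ for the correct node according to the Min-Heap property and then executing a ``swap down''; Lemma~\ref{insdelint} bounds the expected cost of this swap down by $O(1)$. A deletion is symmetric: locate the point along the leaf-to-root path in $O(\log\log n)$, bubble down the empty slot (expected $O(1)$ by Lemma~\ref{insdelint}, using that the deleted point is uniformly random), and finally update and possibly rebalance the IS-tree and $T$, again for $O(1)$ amortized rebalancing cost. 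Since the bubble/swap-down and rebalancing bounds dominate only up to constants while the $O(\log\log n)$ cost of traversing the path in $T$ and updating the IS-tree dominates the whole, the total is $O(\log\log n)$ expected amortized.

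The only delicate step is the bookkeeping in Lemma~\ref{insdelint}: one must argue that the shape of $T$ depends only on the multiset of inserted $x$-coordinates (and the update sequence), so that the assignment of points to nodes is determined solely by the $y$-coordinates and the sequence of rebalancings. This decoupling is what lets us bound, for each level $i$, the probability that the new/deleted point is handled at level $i$ by $O(1/w_i)$ and thereby telescope the geometric series $\sum_i w_i^{c_2-2}$ to $O(1)$ for $c_2<2$. Given this, the theorem follows by combining the IS-tree bounds with Lemmas~\ref{query}, \ref{updint}, and \ref{insdelint}.
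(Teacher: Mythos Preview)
Your proposal is correct and mirrors the paper's argument: the theorem in the paper is stated without a separate proof precisely because it is assembled from the space discussion, the query analysis of Section~\ref{queryds} together with Lemma~\ref{query}, and the update analysis via Lemmas~\ref{updint} and~\ref{insdelint}, combined with the IS-tree bounds from~\cite{MT93,KMSTTZ06}. Your decomposition, including the observation that the $O(\log\log n)$ path traversal dominates the $O(1)$ expected amortized swap/bubble costs, is exactly how the paper arrives at the stated bounds.
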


\subsection{The Second Solution in I/O model}

We now convert our internal memory into a solution for the I/O model. First we substitute the IS-tree with its variant in the I/O model, the ISB-Tree~\cite{KMMSTTZ05}. We implement every consecutive $\Theta(B^2)$ leaves of the ISB-Tree with the data structure of Arge et al.~\cite{ASV99}. Each such structure constitutes a leaf of a weight balanced exponential tree $T$ that we build on top of the $O(n/B^2)$ leaves.

In $T$ every node now stores $B$ points sorted by $y$-coordinate, such that the maximum $y$-coordinate of the points in a node is smaller than all the $y$-coordinates of the points of its children (Min-Heap Property). The $B$ points with overall smallest $y$-coordinates are stored at the root. At a node $u$ we store the $B$ points from the leaves of $T_u$ with smallest $y$-coordinates that are not stored at an ancestor of $u$. At the leaves we consider the $B$ points with smallest $y$-coordinate among the remaining points in the leaf to comprise this list. Moreover, we define the weight parameter of a node at level $i$ to be $w_i$$=$$B^{\smash 2\smash \cdot{\smash (\smash 7\smash /\smash 6\smash )^{\smash i}}}$. Thus we get~$w_{i+1}\smash  =\smash  w_{i}^{\smash 7\smash /\smash 6}$, which yields a height of $\Theta(\log  \log_B n)$. Let $d_i\smash  =\frac{w_i}{w_{i-1}}\smash  =\smash  w_i^{\smash 1\smash /\smash 7}$ denote the {\em degree parameter} for level $i$. All nodes at level $i$ have degree $O(d_i)$. Also every node stores an array that indexes the children according to their $x$-order.

We furthermore need a structure to identify the children with respect to their $y$-coordinates. We replace the RMQ-structure of the internal memory solution with a table. For every possible interval $[k,l]$ over the children of the node, we store in an entry of the table the points of the children that belong to this interval, sorted by $y$-coordinate. Since every node at level $i$ has degree $O(d_i)$, there are $O(d_i^2)$ different intervals and for each interval we store $O(B \cdot d_i)$ points. Thus, the total size of this table is $O(B\cdot d_i^3)$ points or $O(d_i^3)$ disk blocks.

The ISB-Tree consumes $O(n/B)$ blocks~\cite{KMMSTTZ05}. Each of the $O(n/B^2)$ leaves of $T$ contains $B^2$ points. Each of the $n/w_i$ nodes at level $i$ contains $B$ points and a table with $O(B\cdot d_i^3)$ points. Thus, the total space is $O$$\Big($$ n$$ +$$\smash \sum_{i=1}^{\smash height(T)} $$n$$\cdot B$$\cdot$$d_i^3/w_i \Big)  =
O$$\Big(\smash  n\smash  +\smash \sum_{i=1}^{height(T)} n$$\cdot$$ B$$\big/ $$\big(\smash B^{2\cdot\smash {\frac{\smash 7}{6}^{\smash  i}}}\big)^{\frac{\smash  4}{7}} \Big) = O(n)$ points, i.e. $O(n/B)$ disk blocks.


\subsubsection{Querying the Data Structure:}
The query is similar to the internal memory construction. First we access the ISB-Tree, spending $O(\log_B \log n)$ expected I/Os with high probability, given that the $x$-coordinates are smoothly distributed~\cite{KMMSTTZ05}. This points out the leaves of $T$ that contain $a,b$. We perform a 3-sided range query at the two leaf structures. Next, we traverse upwards the leaf-to-root path $P_a$ (resp. $P_b$) on $T$, while recording the index $k$ (resp. $l$) of the traversed child in the table. That costs $\Theta(\log \log_B n)$ I/Os. At each node we report the points of the node that belong to the query range. For all nodes on $P_a - P_b$ and $P_b - P_a$ we query as follows: We access the table at the appropriate children range, recorded by the index~$k$ and $l$. These ranges are always $[k+1,$last child$]$ and $[0,l-1]$ for the node that lie on $P_a - P_b$ and $P_b - P_a$, respectively. The only node where we access a range $[k+1,l-1]$ is the LCA of the leaves that contain $a$ and $b$. The recorded indices facilitate access to these entries in $O(1)$ I/Os. We scan the list of points sorted by $y$-coordinate, until we reach a point with $y$-coordinate bigger than $c$. All scanned points are reported. If the scan has reported all $B$ elements of a child node, the query proceeds recursively to that child, since more member points may lie in its subtree. Note that for these recursive calls, we do not need to access the $B$ points of a node $v$, since we accessed them in $v$'s parent table. The table entries they access contain the complete range of children. If the recursion accesses a leaf, we execute a 3-sided query on it, with respect to $a$ and $b$~\cite{ASV99}.

The list of $B$ points in every node can be accessed in $O(1)$ I/Os. The construction of~\cite{ASV99} allows us to load the $B$ points with minimum $y$-coordinate in a leaf also in $O(1)$ I/Os. Thus, traversing $P_a$ and $P_b$ costs $\Theta(\log \log_B n)$ I/Os worst case. There are $O(\log \log_B n)$ nodes $u$ on $P_a - P_m$ and $P_b - P_m$. The algorithm recurses on nodes that lie within the $x$-range. Since the table entries that we scan are sorted by $y$-coordinate, we access only points that belong to the answer. Thus, we can charge the scanning I/Os to the output. The algorithm recurses on all children nodes whose $B$ points have been reported. The I/Os to access these children can be charged to their points reported by their parents, thus to the output. That allows us to access the child even if it contains only $o(B)$ member points to be reported. The same property holds also for the access to the leaves. Thus we can perform a query on a leaf in $O(t/B)$ I/Os. Summing up, the worst case query complexity of querying $T$ is $O( \log \log_B n + \frac{t}{B})$ I/Os. Hence in total the query costs  $O(\log \log_B n + \frac{t}{B})$ expected I/Os with high probability.
\subsubsection{Inserting and Deleting Points:}
Insertions and deletions of points are in accordance with the internal solution. For the case of insertions, first we update the ISB-tree. This creates a new leaf in the ISB-tree that we also insert at the appropriate leaf of $T$ in $O(1)$ I/Os~\cite{ASV99}. This might cause some ancestors of the leaves to overflow. We split these nodes, as in the internal memory solution. For every split $B$ empty slots ``bubble down''. Next, we update~$T$ with the new point. For the inserted point $p$ we locate the highest ancestor node that contains a point with $y$-coordinate larger than $p$'s. We insert $p$ in the list of the node. This causes an excess point, namely the one with maximum $y$-coordinate among the $B$ points stored in the node, to ``swap down'' towards the leaves. Next, we scan all affected tables to replace a single point with a new one.

In case of deletions, we search the ISB-tree for the deleted point, which points out the appropriate leaf of $T$. By traversing the leaf-to-root path and loading the list of $B$ point, we find the point to be deleted. We remove the point from the list, which creates an empty slot that ``bubbles down'' $T$ towards the leaves. Next we rebalance $T$ as in the internal solution. For every merge we need to ``swap down'' the $B$ largest excess points. For a share, we need to ``bubble down'' $B$ empty slots. Next, we rebuild all affected tables and update the ISB-tree.

\noindent
\textbf{Analysis:} Searching and updating the ISB-tree requires $O(\log_B \log n)$ expected I/Os with high probability, given that the $x$-coordinates are drawn from an $(n/(\log \log n)^{1+\varepsilon}, n^{1/B})$-smooth distribution, for constant~$\varepsilon$$>$$0$~\cite{KMMSTTZ05}.

\begin{lemma}
\label{domext}
For every path corresponding to a ``swap down'' or a ``bubble down'' starting at level $i$, the cost of rebuilding the tables of the paths is $O$$\left( d_{i+1}^3 \right)$ I/Os.
\end{lemma}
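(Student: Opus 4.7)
The plan is to mimic the internal-memory argument of Lem.~\ref{dom} but account for the larger secondary tables instead of the RMQ-structures. A ``swap down'' or ``bubble down'' that starts at a node on level $i$ traces at most two root-to-leaf paths inside the subtree rooted at that node, together with the one edge going up to its parent on level $i+1$. Every table that can be invalidated by the operation therefore sits on a node that lies on levels $0, 1, \ldots, i, i+1$ of these paths; any other table depends only on point-to-child assignments that the operation does not touch, so it need not be rebuilt.

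Next I would bound the rebuilding cost level by level. A node at level $j$ has degree $O(d_j)$, so its table stores $O(B\cdot d_j^3)$ points in $O(d_j^3)$ disk blocks, and it can be rebuilt by a single linear scan in $O(d_j^3)$ I/Os. Since the path visits $O(1)$ nodes per level, the total cost is
\[
O\Bigl(\sum_{j=0}^{i+1} d_j^3\Bigr).
\]

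The main (and only real) calculation is to show this geometric sum is dominated by its top term. Using $w_j = B^{2\cdot (7/6)^j}$ and $d_j = w_j/w_{j-1} = w_{j-1}^{1/6}$, one gets $d_j^3 = w_{j-1}^{1/2}$, so consecutive terms satisfy
\[
\frac{d_{j+1}^3}{d_j^3} \;=\; \frac{w_j^{1/2}}{w_{j-1}^{1/2}} \;=\; w_{j-1}^{1/12},
\]
which is $\ge 2$ for $B\ge 2$. Hence the sum telescopes to $O(d_{i+1}^3)$, which is the claimed bound. The only thing to watch out for is to make sure that the ``bubble down'' of $B$ empty slots, or the ``swap down'' of a single displaced point, is charged to exactly the tables on these two downward paths and the parent's table; everything else is a routine geometric summation.
\qed
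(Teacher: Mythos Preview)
Your argument is essentially the paper's own proof: it too invokes Lem.~\ref{dom}, observes that a swap/bubble down touches at most two paths, charges $O(d_j^3)$ I/Os per rebuilt table, and collapses $\sum_{j=1}^{i+1} d_j^3$ to $O(d_{i+1}^3)$. One small numerical slip: the ratio $d_{j+1}^3/d_j^3 = w_{j-1}^{1/12}$ is only $\ge B^{1/6}$ (not $\ge 2$) at the bottom level, but since $B^{1/6}>1$ is a fixed constant the geometric-sum conclusion still holds; also, the remark about ``$B$ empty slots'' belongs to the rebalancing lemma (Lem.~\ref{updext}), not here, where a single swap/bubble down is being costed.
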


\begin{proof}
Analogously to Lem.~\ref{dom}, a ``swap down'' or a ``bubble down'' traverse at most two paths in $T$. A table at level $j$ costs $O$$(d_{j}^3)$ I/Os to be rebuilt, thus all tables on the paths need  $O$$\left(\smash\sum_{j=1}^{i+1}\smash d_j^{\smash 3} \right)= O$$\left( d_{i+1}^{\smash 3} \right)$ I/Os.\qed
\end{proof}

\begin{lemma}
\label{updext}
Starting with an empty external weight balanced exponential tree, the amortized I/Os for rebalancing it due to insertions or deletions is $O(1)$.
\end{lemma}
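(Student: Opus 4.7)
The plan is to mirror the proof of Lemma~\ref{updint} from the internal setting, with the weight parameters and rebuild costs now tuned to the I/O model. Specifically, I would bound the total rebalancing work across a sequence of $n$ updates by summing, level by level, the number of rebalancings triggered at level $i$ times the I/O cost of each one, and then show the resulting sum is $O(n/B)$ blocks, i.e.\ $O(1)$ amortized I/Os per update.

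First, I would invoke Lemma~\ref{rebal} (applied to the external weight parameter $w_i = B^{2\cdot (7/6)^i}$) to conclude that after a rebalancing of a node at level $i$, at least $\Omega(w_i)$ further insertions or deletions must occur in its subtree before it can overflow or underflow again. Hence a sequence of $n$ updates induces at most $O(n/w_i)$ rebalancings at level~$i$. Next, I would bound the I/O cost of a single rebalancing at level~$i$. By Lemma~\ref{domext}, rebuilding all tables affected by a split, merge, or share at level~$i$ costs $O(d_{i+1}^3)$ I/Os. The remaining work per rebalancing consists of redistributing the $B$ stored points, performing the ``swap down'' of the $B$ excess points (in a merge), or ``bubbling down'' the $B$ empty slots (in a split or share); each of these contributes $O(B)$ I/Os per traversed level, which is absorbed by the table-rebuild cost once one checks that $d_{i+1}^3 \geq B$.

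I would then combine these estimates into the total bound
\[
\sum_{i=1}^{\mathrm{height}(T)} \frac{n}{w_i} \cdot d_{i+1}^3 .
\]
Using the identities $w_{i+1} = w_i^{7/6}$ and $d_{i+1} = w_{i+1}/w_i = w_i^{1/6}$, so that $d_{i+1}^3 = w_i^{1/2}$, the sum becomes $n \sum_i w_i^{-1/2}$. Since $w_i$ grows double-exponentially in~$i$, this series is dominated by its first term $w_1^{-1/2} = B^{-7/6}$, so the total is $O(n/B^{7/6}) = O(n/B)$ I/Os. Divided over the $n$ updates this is $O(1)$ amortized I/Os each, proving the lemma.

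The only delicate step will be the accounting at the unit of a disk block: one must confirm that the $B$-point list stored at a node, its ``bubble down'' / ``swap down'' traffic, and the pointer array can all be touched in $O(d_{i+1}^3)$ I/Os at level~$i$, and in particular that the $O(B)$ overhead of manipulating the lists themselves is swallowed by the table rebuild. Once this is in place, the choice of parameters $w_i = B^{2(7/6)^i}$ is tight enough to make the geometric-like sum collapse; the rest of the argument is a direct transcription of Lemma~\ref{updint}.
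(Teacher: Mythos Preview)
Your approach mirrors the paper's, but there is a gap in the per-rebalancing cost estimate. You invoke Lemma~\ref{domext} once and charge $O(d_{i+1}^3)$ I/Os for all table rebuilds caused by a split, merge, or share at level~$i$. However, Lemma~\ref{domext} bounds the cost of rebuilding tables along \emph{one} swap-down or bubble-down path. In the external tree each rebalancing at level~$i$ creates $B$ empty slots (split or share) or $B$ excess points (merge), and these $B$ bubbles/swaps can fan out along up to $B$ distinct root-to-leaf paths, each of which forces table rebuilds at every level it touches. Your ``remaining work'' of $O(B)$ I/Os per level accounts only for moving the points themselves, not for the $O(d_j^3)$ table rebuild at each touched node at level~$j$. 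The paper therefore charges $O\bigl(d_{i+1}^3 + B\cdot d_i^3\bigr)$ I/Os per rebalancing at level~$i$: one $d_{i+1}^3$ for the parent's table plus $B$ paths each costing $O(d_i^3)$.

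With this corrected cost the level sum becomes
\[
\sum_{i=1}^{\mathrm{height}(T)} \frac{n}{w_i}\bigl(d_{i+1}^3 + B\cdot d_i^3\bigr)
= O\Bigl(n\sum_i \bigl(w_i^{-1/2} + B\cdot w_i^{-4/7}\bigr)\Bigr)
= O(n),
\]
since $B\cdot w_1^{-4/7} = B\cdot B^{-4/3} = B^{-1/3}$. This still yields $O(1)$ amortized I/Os, so your conclusion survives, but your intermediate claim of $O(n/B)$ total I/Os is too strong: the check $d_{i+1}^3 \geq B$ does not absorb $B\cdot d_i^3$ (already at level~$1$ one has $B\cdot d_1^3 = B^2 > B^{7/6} = d_2^3$).
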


\begin{proof}
We follow the proof of Lem.~\ref{updint}. Rebalancing a node at level $i$ requires $O$$\left( d_{i+1}^3\smash +\smash B\smash \cdot \smash d_{i}^3\right)$ I/Os (Lem.~\ref{domext}), since we get $B$ ``swap downs'' and ``bubble downs'' emanating from the node. The total I/O cost for a sequence of $n$ updates is $O$$ \big( \smash \sum_{i=1}^{\smash  height(T)} \frac{n}{w_i} \smash \cdot \smash  ( d_{i+1}^3\smash +\smash B\smash \cdot \smash d_{i}^3) \big) \smash =
\smash O$$\big( \smash  n \smash  \cdot \smash  \sum_{i=1}^{height(T)} w_{i}^{\smash -\smash  1 \smash /\smash  2}\smash  + \smash  B \smash \cdot \smash  w_i^{\smash -\smash  4 \smash /\smash  7}\big)\smash =\smash  O \smash  (\smash  n \smash )$.\qed
\end{proof}

\begin{lemma}
\label{insdelext}
The expected amortized I/Os for inserting or deleting a point in an external weight balanced exponential tree is $O(1)$.
\end{lemma}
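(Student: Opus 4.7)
My plan is to mirror the internal analysis of Lemma~\ref{insdelint}, with two adjustments that reflect the I/O setting: each internal node of $T$ now stores a batch of $B$ points, and the per-level work is dominated by rebuilding the secondary tables rather than by single pointer chases. I would split an update's cost into (i) the structural rebalancing along the leaf-to-root path and (ii) the ``swap down'' or ``bubble down'' triggered by the new or removed point. The rebalancing contribution is $O(1)$ amortized I/Os by Lemma~\ref{updext}, so it remains to bound the expectation of (ii).

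For insertions I would exploit the fact that the shape of $T$ depends only on the sequence of $x$-coordinates (through the ISB-tree) and so is independent of the $y$-coordinates. Fix a level $i$ and let $u$ be the level-$i$ ancestor of the newly created leaf. Extending the two-event decomposition of Lemma~\ref{insdelint} by a factor of $B$, the inserted point $p$ lands at $u$ or higher only if either some leaf of $T_u$ has been promoted to an ancestor above level $i+k$, or the $y$-rank of $p$ among the leaves of $T_u$ is among the top $kB$. This yields a bound of the shape $\sum_{j \geq i+k} B w_{j-1}/w_j + kB/w_i$, which for a suitably large constant $k$ collapses to $O(B/w_i)$ thanks to the super-geometric growth of $w_j$ with $c_2 = 7/6$. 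Multiplying by the swap-down cost $O(d_{i+1}^3)$ from Lemma~\ref{domext} and using $d_{i+1}^3 = w_i^{1/2}$, the expected I/Os are bounded by $\sum_{i \geq 1} B/w_i^{1/2} = \sum_{i \geq 1} B^{1 - (7/6)^i} = O(1)$, since the exponent becomes negative at $i = 1$ and decays doubly geometrically thereafter.

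For deletions, the deleted point is uniformly random among the $n$ stored points. Since level $i$ contains $\Theta(n/w_i)$ nodes each holding $B$ points, the probability that the deleted point resides at level $i$ is $O(B/w_i)$; the complementary case, where the deleted point lives in a leaf's $B^2$-point substructure, contributes only $O(1)$ amortized I/Os via~\cite{ASV99}. Removing an internal-level point triggers a bubble-down of a single empty slot of cost $O(d_{i+1}^3)$ by Lemma~\ref{domext}, and any subsequent share or merge is absorbed into Lemma~\ref{updext}. The expected I/Os therefore satisfy exactly the same geometric sum as in the insertion case, again $O(1)$.

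The main obstacle is the probability estimate that an inserted point rises to level $\geq i$. One has to extend the two-event argument of Lemma~\ref{insdelint} carefully enough to accommodate the $B$-fold capacity of every internal node while still keeping the first event loose enough that the telescoping works against the choice $c_2 = 7/6$. Once that estimate is in place, everything else is a mechanical substitution using $w_i = B^{2(7/6)^i}$ and $d_{i+1}^3 = w_i^{1/2}$ into a doubly geometric series.
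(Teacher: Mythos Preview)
Your proposal is correct and follows essentially the same route as the paper: invoke Lemma~\ref{updext} for the rebalancing part, then bound the probability that the inserted (resp.\ deleted) point lives at level~$i$ or above by $O(B/w_i)$ via the two-event argument of Lemma~\ref{insdelint} scaled by the batch size~$B$, multiply by the per-level swap/bubble cost $d_{i+1}^3 = w_i^{1/2}$ from Lemma~\ref{domext}, and sum the resulting doubly geometric series. You are in fact more explicit than the paper, which hides the choice of~$k$ (necessarily larger here since $c_2 = 7/6 < 3/2$) behind the phrase ``by similar arguments.''
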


\begin{proof}
By similar arguments as in Lem.~\ref{insdelint} and considering that a node contains~$B$ points, we bound the probability that point $p$ ends up at the $i$-th ancestor or higher by $O$$\left( B/w_i \right)$. An update at level $i$ costs $O\smash (d_{i+1}^3)\smash =\smash O$$\left(w_i^{\smash 1\smash /\smash 2} \right)$ I/Os. Thus ``swapping down'' $p$ costs $O$$\big( \smash \sum_{i=1}^{height(T)} w_i^{\smash  1 \smash /\smash  2} \smash \cdot\frac{B}{w_i} \big) \smash  = \smash O \smash (\smash 1\smash )$  expected I/Os.
The same bound holds for deleting $p$, following similar arguments as in Lem.~\ref{insdelint}.
\qed
\end{proof}


\begin{theorem}
In the I/O model, using $O(n/B)$ disk blocks, 3-sided queries can be supported in $O(\log \log_B n + t/B)$ expected I/Os with high probability, and updates in $O(\log_B \log n)$ I/Os expected amortized, given that the $x$-coordinates of the inserted points are drawn from an $(n/(\log \log n)^{1+\varepsilon}, n^{1/B})$-smooth distribution for a constant $\varepsilon > 0$, the $y$-coordinates from an arbitrary distribution, and that the deleted points are drawn uniformly at random among the stored points.
\end{theorem}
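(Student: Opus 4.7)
The plan is to aggregate three independent components whose analyses were given in the preceding subsections: the ISB-tree that indexes the $x$-coordinates, the weight balanced exponential tree $T$ built on top of the $\Theta(B^2)$-sized leaf structures, and the per-node tables that replace the RMQ structures of the internal-memory solution. First, for the space bound I would sum three contributions: $O(n/B)$ blocks for the ISB-tree by~\cite{KMMSTTZ05}, $O(n/B)$ blocks for the $O(n/B^2)$ leaves of $T$ implemented via Lem.~\ref{rebal}, and the table at each node at level $i$ which takes $O(d_i^3)$ disk blocks, contributing $\sum_i (n/w_i) \cdot d_i^3$ blocks in total; since $d_i = w_i^{1/7}$ and $w_1 \geq B^2$, this is a geometric series that telescopes to $O(n/B)$, as already computed in the excerpt.

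Second, for the query bound I would assemble the three stages of the algorithm already described. The ISB-tree lookup for $a$ and $b$ costs $O(\log_B \log n)$ expected I/Os w.h.p.\ under the assumed smoothness on the $x$-coordinates~\cite{KMMSTTZ05}. The upward traversal of the two leaf-to-root paths in $T$ plus the constant-I/O access to each node's stored list of $B$ points gives $O(\log \log_B n)$ worst case I/Os, since $T$ has height $\Theta(\log \log_B n)$ and tables are indexed in $O(1)$ I/Os by the recorded child indices. Finally, the scanning of $y$-sorted table entries, together with the rule that we recurse into a child only after all $B$ of its stored points have already been reported, lets us charge all remaining I/Os to the output, i.e.\ $O(t/B)$; the queries at the two extremal leaves contribute $O(t/B+1)$ I/Os each by Lem.~\ref{rebal}. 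Summing yields the claimed $O(\log_B \log n + t/B)$ expected with high probability.

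Third, for the update bound I would combine three costs. The ISB-tree update takes $O(\log_B \log n)$ expected I/Os w.h.p.\ under the $x$-smoothness assumption~\cite{KMMSTTZ05}. Rebalancing $T$ after an insertion or deletion amortizes to $O(1)$ I/Os per update by Lem.~\ref{updext}. The additional ``swap down'' or ``bubble down'' induced at the level where the new (resp.\ deleted) point first affects $T$ costs $O(1)$ expected amortized I/Os by Lem.~\ref{insdelext}. Adding these three bounds produces the claimed $O(\log_B \log n)$ expected amortized I/Os per update, and the final bookkeeping on the tables is absorbed by the $O(d_{i+1}^3)$ rebuild cost of Lem.~\ref{domext} used inside Lem.~\ref{updext}.

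The main obstacle is the probabilistic argument behind Lem.~\ref{insdelext}, which must hold verbatim in this external setting: it relies on the fact that the shape of $T$ depends only on the sequence of updates and the $x$-coordinates and is independent of the $y$-coordinates, together with the uniform-random deletion assumption, so that a point lands at or above level $i$ only with probability $O(B/w_i)$ even though each node now holds a list of $B$ points. I would need to check that the choice $w_i = B^{2(7/6)^i}$, $d_i = w_i^{1/7}$ makes the expected cost $\sum_i (B/w_i)\cdot w_i^{1/2}$ converge to $O(1)$ and that the $B$-capacity of each node does not inflate the probability bound beyond $O(B/w_i)$; once this is verified, the remaining steps are the routine summation sketched above.
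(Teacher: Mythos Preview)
Your proposal is correct and follows the paper's own argument essentially verbatim: the theorem in the paper is not given a separate proof but is stated as the summary of the preceding space analysis, the query analysis, and Lemmas~\ref{domext}, \ref{updext}, \ref{insdelext}, exactly as you lay it out. One slip to fix: in your query paragraph you correctly record the two pieces $O(\log_B \log n)$ for the ISB-tree search and $O(\log\log_B n)$ for traversing $T$, but then you write the sum as $O(\log_B \log n + t/B)$; the dominant term is $\log\log_B n$ (the height of $T$), which is what the theorem actually claims, so the final line should read $O(\log\log_B n + t/B)$.
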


\section{The Third Solution for the Smooth and the Restricted Distributions}
\label{third}
We would like to improve the query time and simultaneously preserve the update time. For this purpose we will incorporate
to the structure the MPST, which is a static data structure. We
will dynamize it by using the technique of global rebuilding \cite{LO88}, which
unfortunately costs $O(n)$ time.

In order to retain the update time in
the same sublogarithmic levels, we must ensure that at most a
logarithmic number of lower level structures will be violated in a
broader epoch of $O(n)$ updates. Since the violations concern the
$y$-coordinate we will restrict their distribution to the more restricted class,
since Theorem \ref{thm:insdel:n} ensures exactly this property. Thus, the auxiliary PST consumes
at most $O(\log n)$ space during an epoch.

Moreover, we must waive the previous assumption on the $x$-coordinate distribution, as well. Since the query time
of the previous solution was $O(\log n)$ we could afford to pay as much time in order to locate the leaves containing $a$ and $b$.
In this case, though, this blows up our complexity. If, however, we assume that the $x$-coordinates are drawn
from a $(n^{\alpha}, n^{\beta})$-smooth distribution, we can use an IS-tree to index them, given that $0< \alpha , \beta < 1$.
By doing that, we pay w.h.p. $O(\log \log n)$ time to locate $a$ and $b$.

\begin{figure}
\label{fig:2nd:intrnal}
\includegraphics[scale=0.70]{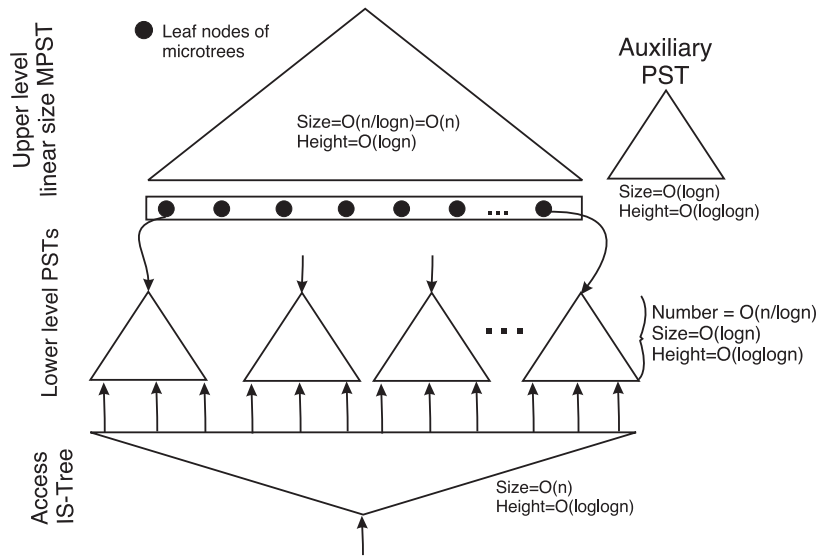}
\caption{The internal memory construction for the restricted distributions}
\end{figure}

When a new epoch starts we take all points from the extra PST and insert them in the respective
buckets in time $O(\log {\log{n}})$ w.h.p. During the epoch we gather all the violating points that should access
the MPST and the points that belong to it and build in parallel a new MPST on them. At the end of the $O(n)$ epoch,
we have built the updated version of the MPST, which we use for the next epoch that just started.
By this way, we keep the MPST of the upper level updated and the size of the extra PST logarithmic.
By incrementally constructing the new MPST we spend $O(1)$ time worst case for each update of the epoch.
As a result, the update operation is carried out in $O(\log \log{n})$ time expected with high probability.

For the 3-sided query $[a,b]\times (-\infty,c]$, we first access the leaves of the lower level that contain $a$ and $b$,
through the IS-tree. This costs $O(\log \log n)$ time w.h.p. Then the query proceeds bottom up in the standard way.
First it traverses the buckets that contain $a$ and $b$ and then it accesses the MPST from the leaves of the buckets' representatives.
Once the query reaches the node of the MPST with $y$-coordinate bigger than $c$, it continues top down to the respective buckets,
which contain part of the answer, by following a single pointer from the nodes of the upper level MPST.
Then we traverse top down these buckets and complete the set of points to report. Finally, we check the auxiliary PST for reported points.
The traversal of the MPST is charged on the size of the answer $O(t)$ and the traversal of the lower level costs $O(\log \log n)$
expected with high probability. Due to Theorem \ref{thm:insdel:n}, the size of the auxiliary PST is with high probability $O(\log n)$,
thus the query spends $O(\log \log n)$ expected with high probability for it. Hence, in total the query time is $O(\log \log n + t)$.

\begin{theorem}
There exists a dynamic main memory data structure that supports
3-sided queries in $O(\log \log n + t)$ time expected w.h.p., can be updated in $O(\log \log n)$ expected w.h.p. and
consumes linear space, under the assumption that the $x$-coordinates
are continuously drawn from a $\mu$-random distribution and the
$y$-coordinates are drawn from the restricted class of distributions.
\end{theorem}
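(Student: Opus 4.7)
The plan is to verify the three complexity bounds (space, update, query) promised by the construction sketched just above the theorem, by combining the properties of its three building blocks: the IS-tree over $x$-coordinates, the static MPST on bucket representatives (dynamized via global rebuilding), the $O(\log n)$-sized buckets implemented as PSTs, and the auxiliary PST for violating points.

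First I would establish the space bound. The $n/\log n$ lower-level buckets, each a PST of size $O(\log n)$, consume $O(n)$ space in total. The IS-tree indexing the buckets by $x$-coordinate consumes linear space by \cite{KMSTTZ06}. The upper-level static MPST on the $O(n/\log n)$ representatives consumes linear space by the earlier MPST lemma. Finally, by Theorem \ref{thm:insdel:n}, during an epoch of $\Theta(n)$ updates the number of violating points is $O(\log n)$ w.h.p., since the $y$-coordinates come from the restricted class (power-law or Zipfian); hence the auxiliary PST has size $O(\log n)$ throughout the epoch. The total is $O(n)$.

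Next I would analyze the update time. A single update first locates the target bucket via the IS-tree in $O(\log \log n)$ expected w.h.p., since the $x$-coordinates follow a suitable smooth distribution. Updating the bucket, which is a PST of size $O(\log n)$, costs $O(\log \log n)$ worst case. If the update is violating, we insert or delete in the auxiliary PST (of size $O(\log n)$ by Theorem \ref{thm:insdel:n}) in $O(\log \log n)$. In parallel, across the $\Theta(n)$ updates of the epoch, we incrementally build the next MPST on the updated set of bucket representatives; this $O(n)$ work is spread uniformly, giving $O(1)$ per update. At the end of the epoch we atomically swap in the new MPST and reinsert the $O(\log n)$ buffered violating points into their buckets, a total of $O(\log n \cdot \log \log n)$ work that is absorbable over the $\Theta(n)$ updates of the new epoch. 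Summing the contributions, each update costs $O(\log \log n)$ expected w.h.p.

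For the query, I would trace the algorithm described above: locate the leaves of the lower level containing $a$ and $b$ via the IS-tree in $O(\log \log n)$ expected w.h.p.; traverse the two boundary buckets bottom-up in output-sensitive fashion; query the MPST, whose query cost is $O(t)$ worst case by the MPST lemma; descend into any bucket flagged by the MPST traversal to complete the output; finally query the auxiliary PST of size $O(\log n)$ in $O(\log \log n + t')$ worst case. Summing gives $O(\log \log n + t)$ expected w.h.p. The main obstacle I anticipate is the bookkeeping needed at the boundary between epochs: ensuring that during the incremental construction of the next MPST, both the current MPST and the auxiliary PST together represent the up-to-date set so that deletions of points currently residing in the upper level are correctly reflected in the query output, and that the $O(n)$ rebuilding cost is charged to the epoch without breaking the $O(\log \log n)$ worst-case-per-update guarantee on the amortization.
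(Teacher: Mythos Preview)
Your proposal is correct and follows essentially the same approach as the paper: the same three-block architecture (IS-tree for access, MPST on bucket representatives dynamized by global rebuilding over $\Theta(n)$-length epochs, PST buckets, auxiliary PST bounded by Theorem~\ref{thm:insdel:n}), and the same per-component charging for space, updates, and queries. If anything, you are more explicit than the paper about amortizing the $O(\log n \cdot \log\log n)$ epoch-boundary reinsertion cost over the next epoch, and your caveat about deletions of points currently represented in the (static) upper level during incremental rebuilding is a genuine correctness detail that the paper leaves implicit.
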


In order to extend the above structure to work in external memory we
will follow a similar scheme with the above structure. We use an auxiliary EPST and index the leaves of the main structure with
and ISB-tree. This imposes that the $x$-coordinates are drawn from a $(n/(\log \log n)^{1+\epsilon}, n^{1-\delta})$-smooth distribution,
where $\epsilon > 0$ and $\delta = 1 -\frac{1}{B}$, otherwise the search bound would not be expected to be doubly logarithmic.
Moreover, the main structure consists of three levels, instead of two. That is, we divide the $n$ elements into
$n' = \frac{n}{\log n}$ buckets of size $\log n$, which we implement as EPSTs (instead of PSTs).
This will constitute the lower level of the whole structure. The $n'$ representatives of the EPSTs are again divided
into buckets of size $O(B)$, which constitute the middle level. The
$n'' = \frac{n'}{B}$ representatives are stored in the leaves of an
external MPST (EMPST), which constitutes the upper level of the whole
structure. 
In total, the space of the aforementioned structures is $O(n' + n'' + n'' \log^{(k)} n'') =
O(\frac{n}{\log n} + \frac{n}{B \log n} + \frac{n }{B \log n} B) =
O(\frac{n}{\log n}) = O(\frac{n}{B})$, where $k$ is such that $\log^{(k)}n'' = O(B)$ holds.

\begin{figure}
\label{fig:2nd:extrnal}
\includegraphics[scale=0.70]{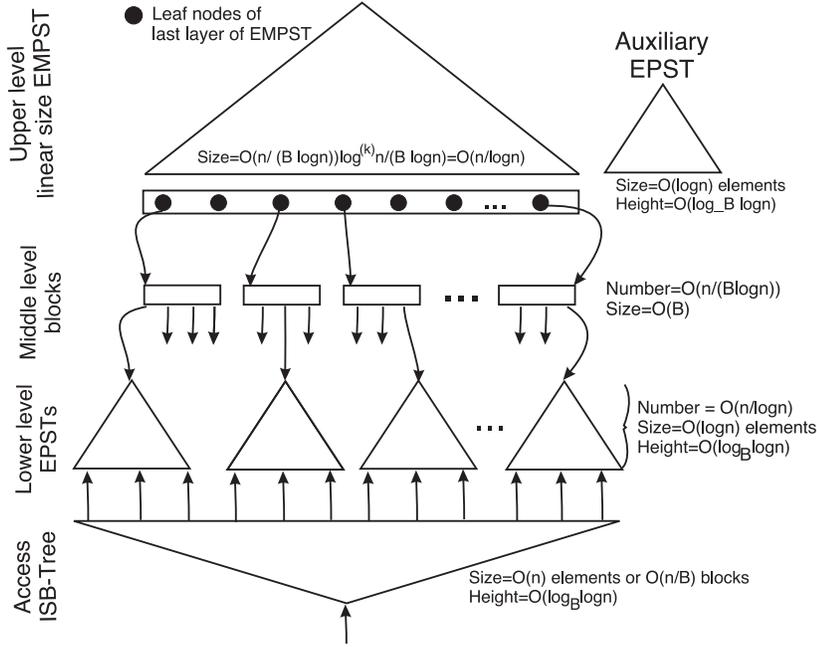}
\caption{The external memory construction for the restricted distributions}
\end{figure}

The update algorithm is similar to the variant of internal memory. The query
algorithm first proceeds bottom up. We locate the appropriate
structures of the lower level in $O(\log_B \log n)$ I/Os w.h.p., due to the assumption on
the $x$-coordinates. The details for this procedure in the I/O model can be found in \cite{KMMSTTZ05}.
Note that if we assume that the $x$-coordinates are drawn from the \emph{grid distribution} with parameters $[1,M]$,
then this access step can be realized in $O(1)$ I/Os. That is done by using an array $A$ of size $M$ as the access data structure.
The position $A[i]$ keeps a pointer to the leaf with $x$-coordinate not bigger than $i$ \cite{SMKLTTV04}.
Then, by executing the query algorithm, we
locate the at most two structures of the middle level that contain the
representative leaves of the EPSTs we have accessed. Similarly we find
the representatives of the middle level structures in the EMPST.
Once we reached the node whose minimum $y$-coordinate is bigger than $c$, the algorithm continues top down.
It traverses the EMPST and accesses the structures of the middle and the lower level that contain parts of the answer.
The query time spent on the EMPST is $O(t/B)$ I/Os. All accessed middle level
structures cost $O(2 + t/B)$ I/Os. The access on the lower level costs $O(\log_B \log n + t/B)$ I/Os. Hence,
the total query time becomes $O(\log_B \log n + t/B)$ I/Os expected with high probability. We get that:

\begin{theorem}
There exists a dynamic external memory data structure that supports
3-sided queries in $O(\log_B \log n + t/B)$ expected w.h.p., can be updated in $O(\log_B \log n)$ expected w.h.p. and
consumes $O(n/B)$ space, under the assumption that the $x$-coordinates
are continuously drawn from a smooth-distribution and the
$y$-coordinates are drawn from the restricted class of distributions.
\end{theorem}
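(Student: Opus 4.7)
The plan is to mirror the three-level I/O construction that has already been outlined, and justify each of the space, query, and update bounds by invoking the ingredients assembled earlier in the paper. I would begin by fixing the layered partition: the $n$ points are split into $n' = n/\log n$ lower-level buckets of size $\log n$ each implemented as an EPST, the $n'$ representatives are grouped into middle-level buckets of size $\Theta(B)$ handled by Lem.~\ref{rebal}, and the $n'' = n'/B$ middle-level representatives populate the leaves of a single EMPST. On top of everything I index the $x$-coordinates of the lower-level leaves with an ISB-tree, and I keep an auxiliary EPST that absorbs all violating points within the current epoch of $\Theta(n)$ updates. For the space bound I would add the contributions $O(n/B)$ for the ISB-tree, $O(n/B)$ across the lower-level EPSTs, $O(n/(B\log n))$ for the middle level, and $O((n''/B)\log^{(k)} n'')$ for the EMPST; choosing $k$ so that $\log^{(k)} n'' = O(B)$ makes the EMPST term telescope to $O(n/(B\log n)) \cdot B = O(n/\log n)$, and the total is $O(n/B)$ as required.

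For the update bound I would argue level by level. Locating the affected lower-level bucket through the ISB-tree costs $O(\log_B \log n)$ I/Os expected w.h.p. under the $(n/(\log\log n)^{1+\varepsilon}, n^{1-\delta})$-smooth assumption on the $x$-coordinates; the subsequent insertion or deletion inside the length-$\log n$ EPST costs $O(\log_B \log n)$ I/Os. The middle-level update via Lem.~\ref{rebal} is $O(1)$ I/Os. The crucial point is that we never pay to update the EMPST online: instead, Thm.~\ref{thm:insdel:n} (together with the remark that Power Law and Zipfian distributions satisfy the required tail bound on $q$) guarantees that only $O(\log n)$ violating points appear per epoch of $\Theta(n)$ updates, so the auxiliary EPST stays of size $O(\log n)$ and each of its operations costs $O(\log_B \log n)$ I/Os. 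We then rebuild the new EMPST (on the updated set of representatives together with the $O(\log n)$ violating points that are going to migrate) incrementally, distributing the $O(n/B)$ rebuilding cost across the $\Theta(n)$ updates of the epoch using standard global rebuilding~\cite{LO88}, which contributes $O(1/B) = O(1)$ I/Os per update amortized. Summing up gives $O(\log_B \log n)$ I/Os expected w.h.p.\ per update.

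For the query $[a,b] \times (-\infty,c]$ I would follow the bottom-up-then-top-down traversal already sketched: the ISB-tree locates the two relevant lower-level buckets in $O(\log_B \log n)$ I/Os expected w.h.p.; the two relevant lower-level EPSTs, the at most two middle-level buckets, and finally the EMPST leaves of the corresponding representatives are reached in additional $O(\log_B \log n)$ I/Os; then from the nearest common ancestor in the EMPST we descend top-down into all subtrees, middle-level buckets, and lower-level EPSTs that still contain members of the answer. The EMPST contributes $O(t/B)$ expected w.h.p.\ I/Os by the EMPST reporting lemma proved earlier, the middle-level structures add $O(1 + t/B)$ by Lem.~\ref{rebal}, and the lower-level EPSTs report their share in $O(t/B)$ amortized. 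The auxiliary EPST, whose size is $O(\log n)$ w.h.p.\ by Thm.~\ref{thm:insdel:n}, is queried in $O(\log_B \log n + t/B)$ I/Os. Altogether the query cost is $O(\log_B \log n + t/B)$ I/Os expected with high probability.

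The main obstacle I expect is the coupling between the two probabilistic assumptions: the $x$-smoothness required by the ISB-tree and the restricted-class assumption on the $y$-coordinates required to cap the violating points at $O(\log n)$ per epoch. The delicate step is to verify that these two assumptions interact cleanly, namely that the bound on violations in Thm.~\ref{thm:insdel:n} is insensitive to which specific bucket of the $x$-partition a point falls in, so that the global rebuilding schedule for the EMPST is well defined independently of the $x$-distribution and remains $O(1)$ per update amortized. Once that decoupling is established, the rest is just adding up the per-level I/O costs for each of update and query.
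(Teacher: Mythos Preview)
Your proposal is correct and follows essentially the same route as the paper: the identical three-level decomposition (lower-level EPSTs of size $\log n$, $O(B)$-sized middle-level buckets, an EMPST on the $n'' = n/(B\log n)$ top-level representatives), access via an ISB-tree, an auxiliary EPST absorbing the $O(\log n)$ violating points of an $\Theta(n)$-epoch guaranteed by Thm.~\ref{thm:insdel:n}, and incremental global rebuilding of the EMPST. Your space accounting, per-level update costs, and the bottom-up-then-top-down query analysis all match the paper's; the coupling concern you flag at the end is not addressed separately in the paper either---it is simply taken for granted that the $x$-smoothness (needed only for ISB access) and the $y$-restriction (needed only to bound violations) act on independent coordinates and so compose without interference.
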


\section{Conclusions}
\label{concl}
We considered the problem of answering three sided range queries of the form $[a, b]\times (-\infty, c]$ under sequences of inserts
and deletes of points, trying to attain linear space and doubly logarithmic expected w.h.p. operation complexities, under
assumptions on the input distributions. We proposed three solutions, which we modified appropriately in order to work for the RAM and
the I/O model. All of them consist of combinations of known data structures that support the 3-sided query operation.

The internal variant of the first solution combines Priority Search Trees \cite{MC81} and achieves $O(\log \log n)$ expected w.h.p.
update time and $O(\log n + t)$ w.c. query time, using linear space. Analogously, the external variant of the first solution
combines External Priority Search Trees \cite{ASV99} and achieves the update operation in $O(\log_B \log n)$ I/Os expected w.h.p.
and the query operation in $O(\log_B n + t/B)$ I/Os amortized expected w.h.p., using linear space. The bounds are true under the
assumption that the $x$ and $y$-coordinates are drawn continuously from $\mu$-random distributions.

The internal variant of the second solution combines exponential weight balanced trees with RMQ structures and achieves $O(\log \log n + t)$ expected query time with high probability and $O(\log \log n)$ expected amortized update time. Analogously, the external variant of the second solution achieves the update operation in $O(\log_B \log n)$ expected amortized I/Os and the query operation in $O(\log \log_B n + t/B)$ expected I/Os with high probability. The main drawback of this solution appears in the I/O-approach, where the block-size factor $B$
is presented in the second logarithm $(O(\log\log_B n))$.

In order to improve the latter, we proposed a third solution with stronger assumptions on the
coordinate distributions. We restricted the $y$-coordinates to be continuously drawn from a restricted distribution and the $x$-
coordinates to be drawn from $(f(n),$ $g(n))$-smooth distributions, for appropriate functions $f$ and $g$, depending on the model. The
internal variant of this solution can be accessed by a IS-tree \cite{KMSTTZ06}, incorporates the Modified Priority Search Tree
\cite{KMSTTV00} and decreases the query complexity to $O(\log \log n + t)$ expected w.h.p., preserving the update and space
complexity. The external variant combines the External Modified Priority Search Tree, which was presented here, with External
Priority Search Trees and is accessed by an ISB-tree \cite{KMMSTTZ05}. The update time is $O(\log_B \log n)$ I/Os expected w.h.p.,
the query time is $O(\log_B \log n + t/B)$ I/Os and the space is linear.

The proposed solutions are practically implementable. Thus, we leave as a future work an
experimental performance evaluation, in order to prove in practice the improved query performance and scalability of the proposed methods.

\end{document}